\documentclass[sigplan,10pt]{acmart}
\setcitestyle{nocompress}   

\usepackage{microtype}
\usepackage{tikz}
\usepackage{booktabs}
\usepackage{array}
\usepackage{float}
\graphicspath{{figures/}}

\setcopyright{none}
\renewcommand\footnotetextcopyrightpermission[1]{}
\AtBeginDocument{%
  \fancypagestyle{standardpagestyle}{\fancyhf{}%
    \fancyfoot[C]{\fontsize{10}{12}\selectfont\thepage}}%
  \fancypagestyle{firstpagestyle}{\fancyhf{}%
    \fancyfoot[C]{\fontsize{10}{12}\selectfont\thepage}}%
  \pagestyle{standardpagestyle}%
  \setlength{\footskip}{30pt}}

\theoremstyle{plain}
\newtheorem{theorem}{Theorem}
\newtheorem{lemma}{Lemma}[section]
\newtheorem{corollary}{Corollary}[section]
\theoremstyle{definition}
\newtheorem{definition}{Definition}[section]

\newcommand{\Src}{\mathsf{Src}}            
\newcommand{\replay}{\mathsf{replay}}      
\newcommand{\scope}{K}                     
\newcommand{\fin}{\mathit{fin}}            
\newcommand{\corein}{\mathsf{core}}        
\newcommand{\gen}{\mathsf{gen}}            
\newcommand{\pay}{\mathsf{pay}}            
\newcommand{\stampf}{\mathsf{stamp}}       
\newcommand{\just}{\mathsf{justified}}
\newcommand{\prem}{\mathsf{premature}}
\newcommand{\dupl}{\mathsf{duplicate}}
\newcommand{\unsafe}{\mathsf{unsafe}}
\newcommand{\lostp}{\mathsf{lost}}
\newcommand{\alo}{\mathsf{alo}}            
\newcommand{\eo}{\mathsf{eo}}              
\newcommand{\clean}{\mathsf{clean}}        
\newcommand{\dedupv}{\mathsf{dedup}}       
\newcommand{\view}{\mathsf{view}}          
\newcommand{\dlv}{\mathsf{dview}}          
\newcommand{\rst}{\mathbin{\restriction}}  
\newcommand{\Running}{\textsf{Running}}
\newcommand{\Crashed}{\textsf{Crashed}}
\newcommand{\Recovered}{\textsf{Recovered}}
\newcommand{\rdrto}[2]{\mathrel{\rightsquigarrow_{#1\langle #2\rangle}}} 
\newcommand{\sdelta}{\Delta}               

\let\origtexttt\texttt
\renewcommand{\texttt}[1]{{\origtexttt{\def\_{\textunderscore\allowbreak}#1}}}
\usepackage{ragged2e}
\AtBeginEnvironment{thebibliography}{\RaggedRight}

\makeatletter
\@ACM@balancefalse
\makeatother
\usepackage{flushend}

\makeatletter
\def\@textbottom{\vskip\z@\@plus 9\p@}
\makeatother

\urldef{\kafkaexpurl}\url{https://kafka.apache.org/43/configuration/broker-configs/#brokerconfigs_producer.id.expiration.ms}

\title{Machine-Checked Dual-Write Recovery from a~Commit Log}
\author{Andreas Andreakis}

\keywords{dual writes, exactly-once delivery, crash recovery,
  change data capture, formal verification, Isabelle/HOL}

\begin{document}

\begin{abstract}
Applications often need to make related facts durable in two
independent systems without a transaction spanning both. If the process
crashes after the second system accepts an operation but before a
source-side checkpoint is written, recovery cannot tell from source
state alone whether to retry. Retrying may duplicate the effect, while
skipping may leave committed work incomplete. Transactional outboxes
and change data capture move this dual write out of an application
process, but relay delivery and checkpointing remain separate durable
operations.

The engineering problem is familiar, and systems address it
with retries, checkpoints, idempotency keys, and fencing, and call the
result exactly-once delivery. Whether that guarantee holds depends on
which event it counts, what evidence recovery requires, and how long
that evidence must survive. Formal
verification has covered adjacent problems, including transaction
isolation, crash safety within one store, and protocols with a shared
commit. The closest formal studies use model checking to verify
particular outbox and log-delivery designs, and their results hold
only for the designs and instances they check. Answering the
three questions in general rather than for one design requires
statements about arbitrary recovery policies, which finite enumeration
cannot reach. We prove them in Isabelle/HOL, and to our knowledge they
have not been machine-checked before.

The main result is an impossibility theorem for source-only recovery. We construct two reachable
post-crash states with the same durable source-side state and different
sink acceptance records. Any recovery policy based only on the source
side must duplicate an effect in one state or leave it undelivered in
the other. The same result holds for a deterministic
deliver-then-checkpoint protocol whose only nondeterminism is crash
timing. When the sink exposes an authoritative, complete, and current
acceptance record, recovery can compute exactly which committed
operations remain undelivered, provided source coordinates distinguish
operations. We extend the model to requests that remain in flight and
to concurrent recoverers, and prove an arrival fence and a claim fence
for these hazards.
Finally, we show how bounded deduplication state and truncated source
history limit the lifetime of the guarantee. A recovery design must
therefore identify the sink's acceptance event, prevent stale actors
from changing the result after it is read, and retain the required
evidence for the full recovery horizon.
\end{abstract}

\maketitle

\section{Introduction}\label{sec:intro}

Consider an order service that commits \textsf{shipped} to its
database. A relay sends a confirmation email for each committed order
and records a checkpoint after every send. The process then crashes and
restarts. Its checkpoint is now one order behind the commit log, so
recovery sends the last confirmation again and the customer receives
two copies. An inspection finds no error in the database as the rows are
correct, the checkpoint is valid, and the store-level audit
passes. This paper explains how a process that checkpoints after every
send can still fail to provide exactly-once delivery.

The failure occurs because delivery and checkpointing are separate
durable operations. The crash may occur after the mail provider
accepts the request but before the process records the checkpoint. At
restart there are two possible states. In one state the provider
accepted the request, and in the other it did not. The service can read
the same database, write-ahead log, outbox, acknowledgement log, restart counter, 
and checkpoint in both states. Sending the message again
creates a duplicate in the first state. Skipping it leaves committed
work incomplete in the second. The local checkpoint is not incorrect.
It simply does not record whether the receiving side accepted the
effect. Only a durable record at the receiving side can distinguish
the two states.

\paragraph{Dual writes.}
The incident is an example of a \emph{dual write}. Related facts
become durable under independent authorities without one commit
spanning both. The source may be a database, while the second
authority may be a mail provider, broker, cache, search index, or
another service. The model distinguishes a durable source commit from
a durable sink acceptance. Section~\ref{sec:model} and the
theorem statements give the remaining requirements.

The usual remedy is to derive the second write from the source's
committed log through change data capture (CDC) or a transactional
outbox~\cite{kleppmann2015dualwrites,ddia,kleppmann2019olep}. This
approach removes the dual write from application code.
Section~\ref{sec:store} proves what it provides at the store tier. A
derived copy is crash-safe exactly when it matches the committed log
up to the log position it advertises. The relay must still deliver
each derived event and record its own progress. These are again two
durable operations without a shared commit. The log-based approach 
therefore moves the recovery problem to the relay. The replayable source
log lets the relay retry failed deliveries and catch up after an outage.
Recovery still depends on the acceptance evidence and controls provided
by the sink.

\paragraph{Why formalize now.}
The engineering guidance for this problem is established. Prior work
recommends deriving state from a log, retrying failed deliveries, using
stable identities, and fencing stale
workers~\cite{kleppmann2015dualwrites,ddia,treat,kleppmann2019olep}.
Systems that combine these mechanisms call the result exactly-once
delivery. Yet such claims often leave unclear what counts as delivery,
how recovery can determine whether it occurred, and how long the required
evidence remains valid.
This can make an information limit look like a checkpoint bug. It
can also make a conditional guarantee look impossible when the
required sink-side evidence and controls are available.

Formal verification has addressed nearby problems such as transaction
isolation~\cite{veriso}, single-store crash
recovery~\cite{gojournal,perennial}, and distributed protocols with a
shared commit~\cite{disel}. Those models do not decide whether an
independent sink already accepted an effect. The closest formal
studies use bounded model checking for outbox and log-delivery
designs~\cite{masternak,logplayer}. To our knowledge, this is the first machine-checked proof of this
recovery limit for all policies that use only the modeled source-side
state.

\paragraph{Results.}
The results are organized around three parts of the recovery problem.

\emph{Source-side impossibility result.}
Sections~\ref{sec:wall} and~\ref{sec:checkpoint} construct reachable
post-crash states that agree on the durable information available to
recovery and differ in the sink's accepted record. Every policy that
reads only the source side must choose the same recovery batch in both
states. That batch duplicates a delivery in one state or leaves one
undelivered in the other. Section~\ref{sec:checkpoint} strengthens the result
to one deterministic deliver-then-checkpoint protocol. The protocol
maintains its cursor correctly, and crash timing alone creates the two
indistinguishable local states.

\emph{Stale recovery information.}
Section~\ref{sec:door} proves that recovery can compute the missing
batch from an authoritative, complete, and current sink acceptance
record under stated conditions. Sections~\ref{sec:wire}
and~\ref{sec:claim} then identify two ways in which that read can
become stale. A request from the crashed process may still be in
flight, or another recoverer may act on the same crash. We prove an
arrival fence for the first hazard and a claim fence for the second.
Both mechanisms act at the sink's acceptance boundary. The arrival
fence also has a proved cost because it may reject an old request that
would otherwise have completed later work.

\emph{Evidence lifetime.}
Section~\ref{sec:memory} proves the standard relationship between
at-least-once delivery and a permanently deduplicated sink view. It
then shows what changes when deduplication state expires or the source
history is truncated. Once the sink forgets an accepted identity, a
valid replay may be processed as new. Once recovery loses the content
of the source history, it may no longer be able to distinguish work
that requires delivery from work that was never committed. The guarantee
therefore lasts only as long as the evidence required by the recovery
procedure.

\paragraph{Scope.}
This paper presents a theoretical study of dual-write recovery.
All theorems are machine-checked in Isabelle/HOL. The models, theorem
statements, assumptions, and proof ideas are included in the text, so
the paper can be read without the formal artifact. The negative results
are constructions over specific machines and policy classes. They show
that the stated failure is reachable and that every policy in the class
fails on the construction, without claiming that every deployed system fails.
The positive results hold under conditions stated with each theorem. We do
not verify a broker, connector, mail provider, or deployed recovery implementation.

The model covers one source authority, one accepting endpoint, and
one delivery direction, alongside in-flight delivery and concurrent
recovery on separate machines. Section~\ref{sec:mech} lists further
boundaries. The theory aims to identify the evidence and acceptance
controls required by a recovery claim, not to certify a product.

\paragraph{Organization and reading paths.}
The paper provides two complementary reading tracks.
Readers focused on formal foundations can follow the
\emph{Theory Track} (Sections~\ref{sec:model}-\ref{sec:related}),
which develops the transition systems, impossibility results,
fence constructions, and retention limits.
Readers focused on practical systems can take the
\emph{Practitioner Track}, jumping directly to
Section~\ref{sec:practitioner}. That self-contained guide requires no
proof mechanics and translates the results into operational guidance for
direct dual writes and CDC outbox relays, complete with sink archetypes,
hazard analyses, and recovery tests.
Section~\ref{sec:conclusion} concludes.

\section{The Model}\label{sec:model}

All results use a common state model. This section defines the shared
state, the first transition system, and the terminology used throughout
the paper. Later sections extend the state with a checkpoint, an
in-flight channel, concurrent recovery, retention, and the store tier.

\paragraph{Histories, frontiers, materialization.}
A \emph{source event} $e$ on a key $k$ inserts or updates that key to a
value, or deletes it. A committed \emph{source history} $H$ is a finite
sequence of (coordinate, event) pairs whose coordinates are
non-decreasing. A \emph{frontier} $f$ is a coordinate of that
history, never a wall-clock time, a broker offset, or a sink-local
sequence number. Given a base map $b$, the \emph{materialization}
$\Src(b,H,f)$ gives each key the value of its latest event at a
coordinate at most $f$ (a delete yields absence), falling back to $b$
where the key has no such event. Formally, histories are arbitrary
lists of this shape and frontiers arbitrary coordinates. The
non-decreasing wellformedness just described, and membership in a
machine's reachable executions, are imposed by stated assumptions and by
the machines' own transition rules, not by the raw data type. The
\emph{core execution state}
\[
\sigma = (b, H, D, Q, \mathit{pend}, \scope, \fin, \mathit{st},
\mathit{Ack})
\]
carries the base, the committed history, an independently delivered
downstream history $D$, an enqueued history $Q$ with the pending set
$\mathit{pend}$ of enqueued-but-undelivered events, the key scope
$\scope$, the interval end $\fin$, a run status
$\mathit{st} \in \{\Running,\ \Crashed\,c,\ \Recovered\}$, and a
source-side acknowledgement history $\mathit{Ack}$. Labelled steps
drive the state: commit to $H$, enqueue, deliver to $D$, acknowledge
into $\mathit{Ack}$, crash (which freezes the run at a
frontier), and recover. A completed recovery cycle returns the run to
\Running{} with a new generation. A two-event instance grounds the
notation and returns throughout the paper: with
$H = [(c_1, \mathsf{ins}\,k\,7),\ (c_2, \mathsf{upd}\,k\,9)]$ and an
empty base, $\Src(b,H,c_1)(k) = 7$, $\Src(b,H,c_2)(k) = 9$, and the
scoped obligations at $c_2$ are both events, in order.

Both stores are read through the one materialization operator at the
same source frontier: $\Src(b,H,f)$ is what the committed log says,
$\Src(b,D,f)$ is what the downstream holds.

\begin{definition}[Scoped mismatch]\label{def:mismatch}
At frontier $f$, the downstream \emph{mismatches} the committed log on
key $k$ when $k\in\scope$ and $\Src(b,D,f)(k) \neq \Src(b,H,f)(k)$.
Write $\clean(\sigma,f)$ when no scoped mismatch holds at $f$.
\end{definition}

When a crash can freeze such a mismatch where an observer sees it is a
question about the \emph{store tier}, and it has an exact answer,
proved in Section~\ref{sec:store}. The subject of this paper sits
above that tier, at a boundary the store never records.

\paragraph{The acceptance boundary.}
The \emph{effect machine} extends $\sigma$ with one field and one
register. A state is $t = \langle \sigma, E, \gamma \rangle$: the core,
an \emph{emissions ledger} $E$, and a \emph{generation} $\gamma$. The
ledger records received effects such as emails sent, messages
published, and webhook calls fired. It is
append-only by construction: a received effect cannot be unreceived,
so no step of any machine in this paper removes a ledger entry. No
store field can read the ledger. It models what happened,
not what the crashed side knows. The generation $\gamma$ is the restart counter of
Section~\ref{sec:intro}. It identifies successive process runs, starting at zero. Each completed resume enters
the successor generation. Related counters in deployed systems are called
epochs or terms.

An \emph{emission} is a triple $x = \langle n, g, (c,e) \rangle$: the
\emph{stamp} $n$ (the length of the committed history when it
fired), the generation $g$ current when it fired, and the \emph{payload}
$\pay(x) = (c,e)$. An emission is \emph{justified} when the durable
committed prefix covered it at fire time, a check the stamp makes
readable from the state:
\[
  \begin{aligned}
  \just(H, x) \;\longleftrightarrow\;&
    \stampf(x) \le |H|\\[-2pt]
  &{}\wedge\; \pay(x) \in \mathrm{take}(\stampf(x), H).
  \end{aligned}
\]
For a ledger $L$, $\pay(L)$ is the ordered pointwise projection
$[\,\pay(x)\mid x\leftarrow L\,]$. On this first machine, delivery is
instantaneous: an ordinary downstream delivery advances $D$ and appends
the received effect to $E$ in a single transition. Section~\ref{sec:wire} splits
that collapsed send/receive step, because a whole class of hazards
arises in the gap between the two.

\paragraph{Hazards and missing deliveries.}
Three predicates over $t$ classify how a ledger can be wrong, and one
checks for missing deliveries. The \emph{obligations} at a frontier are the
scoped, frontier-bounded committed events
\[
  \replay(H,\scope,f) \;=\;
  [\, (c,e) \leftarrow H \mid c \le f \,\wedge\, \mathit{key}(e) \in \scope \,],
\]
and the hazard vocabulary is
\[
\begin{aligned}
  \prem(t) \;&\longleftrightarrow\;
     \exists x \in E .\; \neg\, \just(H, x),\\
  \dupl(t) \;&\longleftrightarrow\;
     \text{some payload repeats on } E,\\
  \unsafe(t) \;&\longleftrightarrow\; \prem(t) \vee \dupl(t),\\
  \lostp(f,t) \;&\longleftrightarrow\;
     \exists p \in \replay(H,\scope,f) .\; p \notin \pay(E),\\
  \alo(f,t) \;&\longleftrightarrow\;
     \replay(H,\scope,f) \subseteq \pay(E).
\end{aligned}
\]
$\unsafe$ is a safety verdict over what \emph{was} emitted: something
fired too early, or fired twice. It does not constrain what never
fired.
$\lostp$ and $\alo$ are each other's complements and form the
completeness axis. Read at a state, they say whether a frontier still
has an undelivered payload. They make no liveness claim. They classify the
state where a result is judged, not what some later
execution might still deliver. The theorems apply them at
post-recovery states, where the recovery being judged has finished its
work. The verdicts are order-invariant. What was delivered matters,
but its order does not. Multiplicity matters to $\dupl$ by
definition, and to the completeness pair only under the
ascending-coordinate condition P4 (Section~\ref{sec:door}).
``Exactly-once at $f$'' always means the conjunction (no hazard, and
at-least-once at $f$), and the bare phrase is never used as a formal
name in this paper: every use is frontier-relative and counted as its
theorem states.

The core's acknowledgement history $\mathit{Ack}$ records
\emph{source-side} acknowledgements, and it supports no delivery
conclusion: equal acknowledgement histories can coexist with different
emission ledgers. An acknowledgement is what the caller heard, not
what the other side did. The opening incident turned on this
distinction, and the model makes it structural.

\paragraph{What each side must provide.}
The model abstracts over both ends. On the source side,
the analysis assumes only a durable, ordered, append-only
committed history: each fact either is in $H$, with a stable
coordinate, or is not. How a fact became durable lies outside the
model: no relational store, multi-object transaction, or isolation
discipline is assumed anywhere. A write-ahead log, an event store, a
replicated queue, or a store limited to single-document writes each
yields an $H$. Strictly ascending coordinates are required twice
(condition P4, Section~\ref{sec:door}). They keep the sink-delta
re-drive batch duplicate-free, and they lift the counting results from
set level, by payload, to per instance. A source that can
lose committed content is not left unmodeled: this section's machines
never shrink $H$, and truncation gets its own regime in
Section~\ref{sec:memory}. And a fact with no single
committing write has no place in $H$ until one write is designated as
its durability event. A fact spread across independent writes with no
such point reproduces this paper's problem one level down, inside the
source.

On the sink side, the judged event is \emph{durable acceptance}, the
moment the receiving authority durably admits the operation: a
broker's acknowledged append, a mail provider accepting a send
request, a store committing a put. Nothing in this paper reaches past
acceptance, not into internal processing, onward delivery, or what a
user interface eventually shows. Sinks differ in what they let
recovery \emph{read}, not in their kind. A sink exposing its own
durable, complete per-operation accepted record supplies the raw
material for the positive results ahead. Their conditions are stated where
they are proved, starting in Section~\ref{sec:door}. A sink readable
only as current contents gets strictly less, and what that read can
and cannot separate is itself proved (Sections~\ref{sec:wall}
and~\ref{sec:checkpoint}). The observation bounds in
Sections~\ref{sec:wall}-\ref{sec:checkpoint} apply when recovery can
read only the modeled source-side state. Source acknowledgements do
not distinguish the constructed cases. The obligation runs one way, from the source's
record to the sink's acceptance. Obligations running both ways are reached here only one
direction at a time, and if even the source keeps no such record, the
question cannot be posed at all.

\paragraph{Re-drives, policies, observations.}
A \emph{re-drive policy} $P$ maps a state to a batch of payloads. The
re-drive relation $t \rdrto{P}{f} t'$ is one atomic recovery step doing
two things: it heals the store (the delivered history is reconciled to
the committed image at $f$, so $\clean(t',f)$ holds after \emph{every}
re-drive, whatever the policy), and it appends $P(t)$'s emissions to
the ledger, stamped with the current history length and generation.
That the step is one atom is a deliberate model boundary.
Section~\ref{sec:mech} discusses this atomicity assumption, and
Section~\ref{sec:door} shows that the central bound does not lean on
it. An \emph{observation} is any function $\mathit{obs}$ of the state,
of any result type. A policy $P$ \emph{factors through} $\mathit{obs}$ when
$P = g \circ \mathit{obs}$ for some $g$. Factoring is the paper's only
notion of what recovery reads: a policy that factors through
$\mathit{obs}$ can use $\mathit{obs}$'s output and nothing else. We consider three observations. The complete modeled store tier includes
both source and downstream histories, with
$\mathit{obs}(t) = \corein(t)$. The second observation adds the
generation, $\mathit{obs}(t) = (\corein(t), \gen(t))$. The third also
includes the sink's acceptance record. Section~\ref{sec:wall} rules
out policies based on the first two observations. Section~\ref{sec:checkpoint}
then studies a fixed protocol using the durable-local view, which
excludes downstream content as well as the acceptance record.
Section~\ref{sec:door} gives a correct sink-reading policy under its
stated conditions.

\paragraph{Mapping the model to systems.}
Table~\ref{tab:subst} fixes the substitution between practitioner
vocabulary and model objects. Two rows distinguish roles. An outbox
table is the committed log when a relay reads it
as its source, and it is an accepted record only if it also holds
sink-owned acceptance state updated atomically with acceptance. Merely
consulting a source outbox or a relay cursor does not turn either into
the sink's record. Which theorems govern a physical table depends on
the role its fields actually play. The substitution is one-way,
as theorems are stated over abstract model objects rather than concrete
technologies like webhooks or caches. The target-agnostic
objects subsume them. We avoid ``consistent'' because it conflates
several properties. When this paper means the stores agree at a
frontier it says mismatch-free, and
where it means delivery went right it names the hazard and the
frontier.

\begin{table}[t]
\caption{What the model's objects stand for. The mapping is one-way
(theorems name only the left column), and the outbox rows are
role-split on purpose.}
\label{tab:subst}
\footnotesize
\begin{tabular}{@{}>{\raggedright\arraybackslash}p{0.42\columnwidth}>{\raggedright\arraybackslash}p{\dimexpr0.58\columnwidth-2\tabcolsep\relax}@{}}
\toprule
\textbf{Model object} & \textbf{What it stands for} \\
\midrule
emission $x$ on the ledger $E$ & the second write: an email, a webhook
call, a cache fill, an index update \\
\addlinespace[2pt]
accepted record $A$ (\S\ref{sec:wire}) & the sink's durable record of
what it has accepted \\
\addlinespace[2pt]
wire $W$ (\S\ref{sec:wire}) & retries, socket buffers, requests still
in flight \\
\addlinespace[2pt]
store state inside $\sigma$ & source/downstream histories, queue and
pending state, run status, source acknowledgements \\
\addlinespace[2pt]
cursor (\S\ref{sec:checkpoint}) & a durable checkpoint of delivery
progress \\
\addlinespace[2pt]
generation $\gamma$ (\S\ref{sec:model}), fence $\varphi$
(\S\ref{sec:wire}) & producer
generation and the sink's acceptance threshold \\
\addlinespace[2pt]
committed log $H$ & the source's log, \emph{or an outbox table read
as the relay's source} \\
\addlinespace[2pt]
accepted record $A$ & sink-owned acceptance state colocated with an
outbox only when updated atomically with acceptance
(\S\ref{sec:store}) \\
\addlinespace[2pt]
journal $J$, floor $\mathit{fl}$ (\S\ref{sec:claim},
\S\ref{sec:memory}) & the upstream log as justification authority,
retention and compaction \\
\bottomrule
\end{tabular}
\end{table}

\paragraph{Scope of the results.}
Negative results concern the policy classes and schedule families
named in their statements. Positive results hold under their stated
conditions. The store-tier equivalence in Section~\ref{sec:store}
ranges over all implementations satisfying its assumptions.

\section{The Store-Side Observation Bound}\label{sec:wall}

After a crash, recovery must decide what to re-send. The first two
theorems show that no policy computed only from the durable store and
its control plane is correct on all reachable states. The state
available to recovery does not reveal whether the sink accepted the
operation.

\paragraph{The designed pair.}
We construct two reachable post-crash states with the same store
state and generation but different acceptance records. The policy
being tested chooses what to resend from these states. It need not
have produced the histories that led to them.
Section~\ref{sec:checkpoint} shows how crash timing creates this
ambiguity in a fixed protocol.

Two runs of the effect machine commit the same two facts and crash with
the second delivery enqueued but unsent. Their first recoveries then
diverge in one respect: the machine's recovery rule takes a
\emph{reconcile index} $m$ marking where re-delivery starts, and the
two runs use different values of it. With $m=1$, reconcile emits the
missing payload, and the emissions ledger records it. With $m=2$, reconcile
rebuilds the store and emits nothing, an advanced-cursor pathology one
notch away from the incident's stale-cursor duplicate. Both runs then
resume and stand crashed again at the same frontier $c_2$ with the
store fully caught up. At that point their formal cores are byte-equal:
base, committed and delivered histories, queue and pending state,
scope, interval, status, and source acknowledgements all agree, and so
do their generations. The earlier $m$ choices were parameters of
actions already taken. No trace of them remains in either state. Both
states are hazard-free, and both have genuinely emitted work
($E\neq[\,]$). One member still has an undelivered payload, and the other has
already delivered it. They differ only in the emissions ledger:
\[
  \pay(E(t_1)) = [(c_1,e_1),\,(c_2,e_2)],
  \qquad
  \pay(E(t_2)) = [(c_1,e_1)].
\]

\begin{theorem}[Observation Bound]\label{thm:obsbound}
On the effect machine there are reachable states $t_1, t_2$ and a
frontier $c_2$ with
\[
  \corein(t_1) = \corein(t_2), \quad
  \gen(t_1) = \gen(t_2), \quad
  E(t_1) \neq E(t_2),
\]
both hazard-free and with nonempty emission ledgers, such that for every
observation $\mathit{obs}$ of any result type with
$\mathit{obs}(t_1) = \mathit{obs}(t_2)$, and every policy
$P = g \circ \mathit{obs}$, there are re-drives
\[
  t_1 \rdrto{P}{c_2} t_1' \quad\text{and}\quad t_2 \rdrto{P}{c_2} t_2'
\]
with $\clean(t_1',c_2)$ and $\clean(t_2',c_2)$, and
\[
  \unsafe(t_1') \,\vee\, \lostp(c_2, t_2').
\]
\end{theorem}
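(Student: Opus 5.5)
The argument has two parts: construct the pair explicitly, then run one uniform case split on the batch that any observation-factoring policy must return for both states.

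\emph{Construction.} Fix $H=[(c_1,e_1),(c_2,e_2)]$ with both keys in $\scope$ and $c_1<c_2$. Use the two executions described above. Both commit the two events, deliver the first (which appends $\langle 2,0,(c_1,e_1)\rangle$ or an equivalent stamped emission to $E$), enqueue the second, and crash. The first run then recovers with reconcile index $m=1$, the second with $m=2$. Both runs resume and crash again at $c_2$.

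For each step I will check that the machine's transition rule fires. This is a concrete symbolic evaluation: the generation increments identically on both runs, and the reconcile step rebuilds $D$ to the committed image at $c_2$ in both. Comparing the resulting tuples then gives $\corein(t_1)=\corein(t_2)$ and $\gen(t_1)=\gen(t_2)$. The ledgers have $\pay(E(t_1))=[(c_1,e_1),(c_2,e_2)]$ and $\pay(E(t_2))=[(c_1,e_1)]$, so $E(t_1)\ne E(t_2)$ and both are nonempty. Hazard-freedom holds because every emission was stamped after its payload was committed, so $\just$ holds for each entry, and no payload repeats.

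\emph{Case split.} Let $\mathit{obs}$ agree on $t_1,t_2$ and let $P=g\circ\mathit{obs}$. Then $P(t_1)=P(t_2)=B$ for a single batch $B$. The re-drive relation is enabled at any crashed state at frontier $c_2$, so it produces $t_i'$ with $\clean(t_i',c_2)$ directly from its definition. It also gives $E(t_i')=E(t_i)\mathbin{@}\mathit{emit}(B)$.
\begin{itemize}
\item If $(c_2,e_2)\in B$: then $(c_2,e_2)$ occurs twice in $\pay(E(t_1'))$, so $\dupl(t_1')$ holds and hence $\unsafe(t_1')$.
\item If $(c_2,e_2)\notin B$: then $(c_2,e_2)\in\replay(H,\scope,c_2)$ but $(c_2,e_2)\notin\pay(E(t_2'))$, so $\lostp(c_2,t_2')$ holds.
\end{itemize}
The second case needs the re-drive to leave $H$ unchanged, which holds because re-drive only reconciles $D$ and appends to $E$.

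\emph{Main obstacle.} I expect the hard step to be the reachability construction, specifically showing that the cores really coincide after the divergent recoveries. The $m=1$ and $m=2$ reconcile paths must leave identical $D$, $Q$, $\mathit{pend}$, status, and $\mathit{Ack}$, not merely equivalent ones. I would first try to choose the enqueue, acknowledgement, and crash steps so that the two reconcile rules both end by resetting $Q$ and $\mathit{pend}$ to the same normal form. If some field, such as $\mathit{Ack}$, records the reconcile index, I would reorder the steps so that the differing actions leave no trace in any core field. The case split over $P$ is routine by comparison.
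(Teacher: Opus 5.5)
Your proposal is correct and follows the paper's construction: a two-event run that crashes with the second delivery enqueued, first recoveries that differ only in the reconcile index ($m=1$ emits $(c_2,e_2)$, while $m=2$ rebuilds the store without emitting), then resume and a second crash at $c_2$ with byte-equal cores and generations, followed by the same-batch argument. Your two-way split on whether $(c_2,e_2)\in B$ is the paper's prose argument and yields the sharpened duplicate-or-lose verdict directly; the mechanized proof instead splits three ways (empty batch, batch containing a committed payload, non-empty batch outside the committed prefix) and so records the premature outcome as a separate case.
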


Any policy that observes the same state in both executions must choose
the same batch. If the batch contains the missing payload, one execution
receives a duplicate. If the batch omits it, the other execution
remains incomplete. Adding uncommitted work only creates a premature
effect. Both repaired stores are mismatch-free, so a store audit
cannot distinguish the failure. Figure~\ref{fig:observation-fork}
shows this argument.

\begin{figure}[t]
\centering
\includegraphics[width=\columnwidth]{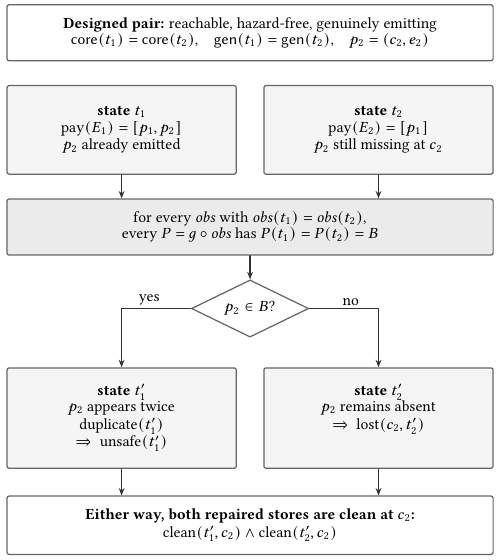}
\Description{Two reachable states with equal recovery observations
converge on one recovery batch. A decision diamond asks whether the
missing payload is in that batch. One branch duplicates it in the first
state and the other loses it in the second. Both branches end with
clean repaired stores.}
\caption{The constructed pair for
Theorems~\ref{thm:obsbound}-\ref{thm:controlplane}. The states have
equal core and generation but different emissions ledgers. Any common
observation yields the same batch, which duplicates $p_2$ in $t_1$ or
leaves it undelivered in $t_2$. Both repaired stores are mismatch-free.}
\label{fig:observation-fork}
\end{figure}

The observation is otherwise unrestricted. It may be any function of
the state, with any result type, and only equality on the constructed
pair is assumed. The result applies only to information represented
by the modeled observation.

\begin{theorem}[Control-Plane Bound]\label{thm:controlplane}
Every store-measured policy, computed as $P(t) = g(\corein(t))$
for every $t$, and every store-and-generation-measured policy, computed
as $P(t) = g(\corein(t), \gen(t))$, is defeated on a
reachable effect-machine pair as in Theorem~\ref{thm:obsbound}: both
re-drives leave the store mismatch-free at the pair's frontier, and the
outcome is a hazard on one member or a loss on the other.
\end{theorem}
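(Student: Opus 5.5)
The plan is to derive Theorem~\ref{thm:controlplane} as a corollary of Theorem~\ref{thm:obsbound} by instantiating its universally quantified observation. I take the pair $t_1,t_2$ and frontier $c_2$ produced by Theorem~\ref{thm:obsbound}. For a store-measured policy $P$ with $P(t)=g(\corein(t))$ for every $t$, I set $\mathit{obs}=\corein$. Then $\mathit{obs}(t_1)=\mathit{obs}(t_2)$ is the first conjunct of Theorem~\ref{thm:obsbound}, and $P=g\circ\mathit{obs}$ holds by function extensionality from the pointwise hypothesis. For a store-and-generation-measured policy I set $\mathit{obs}(t)=(\corein(t),\gen(t))$, a pair-valued observation. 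Equality on the pair follows from the first two conjuncts together, and factoring is again extensionality. Because the observation in Theorem~\ref{thm:obsbound} may have any result type, both instantiations are admissible.

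With factoring established, Theorem~\ref{thm:obsbound} supplies re-drives $t_1\rdrto{P}{c_2}t_1'$ and $t_2\rdrto{P}{c_2}t_2'$ satisfying $\clean(t_1',c_2)$, $\clean(t_2',c_2)$, and $\unsafe(t_1')\vee\lostp(c_2,t_2')$. This is exactly the conclusion asked for: both repaired stores are mismatch-free at the frontier, and the result is a hazard on one member or a loss on the other. Reachability of the pair and the facts that both states are hazard-free with nonempty ledgers are inherited unchanged, so the pair is the same designed pair ``as in Theorem~\ref{thm:obsbound}''.

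I expect the only real obstacle to be formal bookkeeping rather than mathematics. The hypothesis is stated pointwise, $P(t)=g(\ldots)$ for every $t$, while Theorem~\ref{thm:obsbound} asks for $P=g\circ\mathit{obs}$ as an equation of functions. For the generation case, $g$ must also be read as a function on pairs, which may need uncurrying. I would discharge both points first with extensionality and a case split on the pair, and then apply Theorem~\ref{thm:obsbound} directly. If that theorem were instead stated only for particular observations, the fallback would be to repeat its case analysis on whether $(c_2,e_2)$ lies in the common batch $g(\corein(t_1))$:
\begin{itemize}
\item if it does, appending it to $E(t_1)$, which already contains that payload, yields $\dupl$, hence $\unsafe(t_1')$;
\item if it does not, $t_2'$ still misses it, yielding $\lostp(c_2,t_2')$.
\end{itemize}
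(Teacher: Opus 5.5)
Your proposal is correct and matches the paper's argument: both policy classes factor through an observation ($\corein$, respectively $(\corein,\gen)$) that agrees on the designed pair, so Theorem~\ref{thm:obsbound} applies to them directly. Your fallback split on whether $(c_2,e_2)$ lies in the common batch is simpler than the three-way split (empty batch, committed-prefix payload, batch entirely outside) that the mechanized proof checks, and it gives the paper's sharpened two-outcome reading, duplicate or undelivered, directly.
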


\begin{proof}[Proof idea]
Policies of both classes agree on the designed pair, because the pair
is equal on core and generation. Theorem~\ref{thm:obsbound} then
applies. The mechanized proof checks the three-way split
directly (empty batch, batch containing a payload from the committed
prefix, non-empty batch entirely outside it), and each case yields one
of the stated outcomes.
\end{proof}

\paragraph{General and pair-specific outcomes.}
The disjunct in Theorem~\ref{thm:obsbound} is $\unsafe \vee \lostp$,
and since $\unsafe$ itself splits, the general result has three outcomes:
duplicate, lose, or fire prematurely. At the designed pair the
checked statement sharpens to the practitioner's two, because in the
premature case the skip-side member still lacks its delivery: the
verdict there is \emph{duplicate a delivery or leave it incomplete at the judged frontier}.
Both forms are machine-checked. The three-outcome statements remain
the general results, and the two-outcome readings apply to the designed
pairs used in the examples. The loss outcome carries
its standing disclosure: it is judged where the recovery under
examination has finished, not a claim that no later recovery could
still act.

\paragraph{Store and ledger health are independent.}
Store health and ledger health are orthogonal on this machine, in both
directions: states with a broken store and a clean ledger are
reachable, and so are states with a healed store and a broken ledger.
A second crash-and-re-drive cycle can raise the same payload's
count in the emissions ledger to three. An accurate delivered-count cursor
persisted \emph{outside} the
store would separate the pair, by being a sink-side record under
another name, which is precisely the observation class
Section~\ref{sec:door} builds on. Distinguishing the pair alone does not establish a recovery guarantee.
The positive results below apply to a specific policy under stated
conditions.

There is also a structural version of that boundary. A proved
dichotomy partitions all policy functions: every re-drive policy
either is a function of the modeled store and generation (and fails on the pair in
Theorem~\ref{thm:controlplane}) or distinguishes some two states
that agree there and differ on the ledger. The witnessing states of the
second class need not be reachable. The split is extensional, a
statement about information rather than operations. Its value is
exhaustiveness: at this abstraction there is no third kind of policy.

\paragraph{Scope of the observation bound.}
Both slogans that circulate about this problem fail against these
results. ``Exactly-once is impossible'' is too coarse:
Sections~\ref{sec:door},~\ref{sec:wire}, and~\ref{sec:memory} prove
exactly-once results at their frontiers, under conditions a system can
satisfy. The true statement is the bound above (no policy
blind to the sink's record achieves it), together with the conditions
under which the positive results ahead hold. ``Just read the sink before
re-driving'' fails in the other direction: once deliveries travel,
Section~\ref{sec:wire}'s straggler defeats every channel-blind batch
selector on the constructed pair under the unfenced re-drive relation.
The positive result changes acceptance instead of reading more.

\section{The Checkpoint Window}\label{sec:checkpoint}

The observation bound uses two earlier reconciliation choices. To show
the same ambiguity in a fixed implementation, we add a durable cursor
and study a deterministic deliver-then-checkpoint protocol. A crash
before delivery and a crash after delivery but before checkpointing
leave recovery with the same durable-local view and different sink
records. The cursor is part of the common view.

\paragraph{The checkpointed machine.}
Extend the effect machine with a single durable field: a progress
cursor. Delivery and checkpointing are separate
transitions: an ordinary delivery advances the downstream and the
ledger as before, and a separate \emph{persist} step advances the
cursor by one, permitted only when there is an actual delivery to
acknowledge. A crash may land between them, and that window is the
subject of this section.
The machine's one recovery rule is the effect machine's emitting
reconcile with its index $m$ \emph{instantiated by the machine's own
cursor}: recovery re-fires the committed suffix from wherever the
durable cursor points. The free parameter of
Section~\ref{sec:wall} is thereby instantiated by a durable cursor:
this machine's recovery action carries only the frontier, and nothing
else is left for an adversary to choose at recovery time.

\paragraph{Fixed protocol and crash schedule.}
On this machine we pin one \emph{protocol}: a function from states to
next actions implementing the faithful deliver-then-persist discipline
over the running two-event workload (commit $e_1$, enqueue $e_1$,
deliver $e_1$, and persist, then repeat these four actions for $e_2$
and stop). The protocol
itself never crashes. A run consumes a \emph{schedule} over a
two-letter alphabet: \textsf{Go} executes the protocol's unique next
action, \textsf{Crash} is the adversary's only move, the machine's
ordinary status-flip crash, at the point in the run where it lands. The
endpoint of a run is a function of its schedule: crash placement is the
only nondeterminism anywhere in the construction, and no per-run
parameter exists that could distinguish runs.

Two schedules matter: $\textsf{Go}^{7}\,\textsf{Crash}$ and
$\textsf{Go}^{6}\,\textsf{Crash}$ (the crash landing just after the
second delivery, versus just before it). Both runs end crashed with
cursor $1$: the first inside the second deliver-to-persist window, the
second right before that window opens. At the first run's endpoint, the
emissions ledger holds both payloads. At the second it holds only the first,
exactly the ledgers of Section~\ref{sec:wall}'s pair, this time
produced by one fixed protocol (Figure~\ref{fig:window}).

\begin{figure}[t]
\centering
\includegraphics[width=\columnwidth]{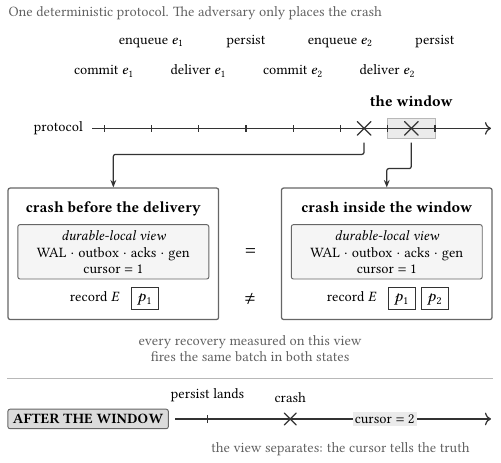}
\Description{A protocol timeline with eight labeled steps and a shaded
window between the second delivery and its persist. Arrows connect a
crash before the delivery and a crash inside the window to two cards.
The cards' durable-local view chips are identical and joined by an
equals sign, while their record cells differ, one holding one payload,
the other two. A lower rail shows a crash after the persist yielding
cursor two, which separates the view.}
\caption{The checkpoint window. Crashes immediately before the second
delivery and between that delivery and its persist produce equal
durable-local views but different emissions ledgers
(Theorem~\ref{thm:checkpoint}). A crash after persist advances the
cursor and separates the states.}
\label{fig:window}
\end{figure}

\paragraph{What recovery can read.}
Call the \emph{durable-local view} of a state everything a recovery
procedure of the crashed process reads without querying the sink side:
the base map, the committed source log, the enqueue (outbox) log, the
acknowledgement log, the scope, the interval end, the run status, the
recovery generation, and the cursor. Write $\dlv(\kappa)$ for that
projection, and call a policy \emph{durable-local-measured} when it
factors through it. Three parts of the full state sit outside the view,
each for a stated reason. The emissions ledger is the sink-side record.
Excluding it is the entire question. The downstream store's
delivered content is excluded because reading it \emph{is} reading the
sink: it is not local state, and the pair below is provably not
equal on it, so that read genuinely separates the members. The
dichotomy at this section's end places such readers outside the
defeated policy class. The pending delivery window is
volatile by this machine's own modeling assumption: it is the crashed
process's in-memory buffer, which recovery in this model does not
retain. For each exclusion, a companion result shows that adding the
excluded component back changes the
verdict. The crash transition keeps every field. The model restricts which
of those fields recovery may read after the crash.

\begin{theorem}[Checkpoint Dilemma]\label{thm:checkpoint}
On the checkpointed machine there are reachable states
$\kappa_1, \kappa_2$, the endpoints of the two protocol schedules
above, with
\[
  \dlv(\kappa_1) = \dlv(\kappa_2), \qquad
  E(\kappa_1) \neq E(\kappa_2),
\]
both hazard-free and with nonempty emission ledgers, such that every
durable-local-measured policy $P$ is defeated at the finish frontier
$c_2$: there are re-drives
$\kappa_1 \rdrto{P}{c_2} \kappa_1'$ and
$\kappa_2 \rdrto{P}{c_2} \kappa_2'$, both leaving the store
mismatch-free at $c_2$, with
\[
  \unsafe(\kappa_1') \,\vee\, \lostp(c_2, \kappa_2').
\]
At this pair the checked verdict again sharpens to two outcomes: duplicate
a delivery, or leave one undelivered at the judged frontier.
\end{theorem}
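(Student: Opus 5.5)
The proof is a direct construction followed by the same factoring argument used for Theorem~\ref{thm:obsbound}. First, compute the two endpoints by running the fixed protocol on the schedules $\textsf{Go}^{7}\,\textsf{Crash}$ and $\textsf{Go}^{6}\,\textsf{Crash}$. Because the protocol is a function from states to next actions, each endpoint is obtained by symbolic evaluation of seven (respectively six) protocol steps followed by the status-flip crash. Reachability then follows by induction along the schedule: each \textsf{Go} fires a transition the checkpointed machine permits, and each \textsf{Crash} is the machine's ordinary crash rule.

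The expected endpoints are as follows. Both have $H=[(c_1,e_1),(c_2,e_2)]$, the same enqueue and acknowledgement logs, generation $0$, status $\Crashed\,c_2$, and cursor $1$, since the second persist never runs. The first endpoint $\kappa_1$ has $D$ and $E$ containing both events. The second endpoint $\kappa_2$ has only the first event in each. The claim $\dlv(\kappa_1)=\dlv(\kappa_2)$ is proved by unfolding the projection. This is where the exclusions matter: $D$, $E$ and the pending window genuinely differ, but none of them is in the view.

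One point needs care. The enqueue of $e_2$ precedes its delivery, so it falls inside both prefixes. Hence the enqueue log agrees in both states, and only $\mathit{pend}$ differs, which is why $\mathit{pend}$ must be excluded from $\dlv$. The remaining checks are routine computation on the concrete ledgers:
\begin{itemize}
\item $E(\kappa_1)\neq E(\kappa_2)$;
\item both ledgers are nonempty;
\item every emission is justified, because each was stamped after its commit;
\item no payload repeats, so both states are hazard-free.
\end{itemize}

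For the defeat, let $P=g\circ\dlv$. Then $P(\kappa_1)=P(\kappa_2)=:B$. The re-drive relation at $c_2$ is always enabled from a crashed state and always heals the store, so both re-drives exist and give $\clean(\kappa_i',c_2)$ for free. The final verdict follows by a three-way case split on $B$, mirroring the proof of Theorem~\ref{thm:controlplane}:
\begin{itemize}
\item If $B$ contains $(c_2,e_2)$, then $\kappa_1'$ repeats it, so $\dupl$ holds.
\item Otherwise, if $B$ contains $(c_1,e_1)$, both members gain a repeat of $p_1$, so $\unsafe(\kappa_1')$ holds again.
\item Otherwise, if $B$ is nonempty with a payload outside the committed prefix, that emission is unjustified at its stamp, so $\prem(\kappa_1')$ holds. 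In this case $\kappa_2'$ still lacks $(c_2,e_2)$, which gives the sharpened two-outcome form.
\item If $B$ is empty, $\kappa_2'$ lacks $(c_2,e_2)\in\replay(H,\scope,c_2)$, so $\lostp(c_2,\kappa_2')$ holds.
\end{itemize}

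The main obstacle is getting the endpoint computation to agree exactly on the view. In particular, I must confirm that no field inside $\dlv$, such as the acknowledgement log or the enqueue log, is touched by the sixth-versus-seventh step. If delivery also appends to $\mathit{Ack}$, the split would break. In that case I would first check the transition definitions to see whether acknowledgement is a separate step in the pinned protocol, and if so the argument stands as planned.
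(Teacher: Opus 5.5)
Your proposal is correct and follows the paper's route. You compute the two schedule endpoints, unfold $\dlv$ to get equality: the acknowledgement log is untouched because acknowledge is a separate labelled step that the pinned protocol never takes, so only $D$, $E$, and $\mathit{pend}$ differ. You then case-split the common batch as in the mechanized three-way split described for Theorem~\ref{thm:controlplane}. Your four cases are correct, though a two-way split on whether $(c_2,e_2)\in B$ already gives both the stated disjunction and its sharpened duplicate-or-lose form.
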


The cursor is inside the view: recovery reads it, and it is equal on
the two members. So are the committed log, the outbox, the
acknowledgements, the status, and the generation. The pair does not
arise from pathological bookkeeping. It is the deliver-then-persist
window, crossed by a crash, seen from the inside.

\paragraph{The deterministic protocol.}
The defeat quantifies over all durable-local-measured policies, and the
machine's own faithful recovery inhabits the class, so the theorem
applies to it. A checked trace also runs it concretely. Fired from
the two view-indistinguishable members, the same parameter-free
recovery transition is exactly right on the lagging member (store
healed, no hazard, no delivery missing at $c_2$) and produces an effect-unsafe
duplicate on the delivered member, every entry of which is individually
justified. That is the familiar at-least-once window of checkpointed
delivery, stated and checked as a theorem. Recovery emits $p_2$ after
either crash, but only the crash after delivery makes that emission a
duplicate.

\paragraph{Relation to the equal-core pair.}
One distinction separates this result from
Theorem~\ref{thm:obsbound}. That pair was equal on the whole modeled
core. This pair is equal on the durable-local view and provably
\emph{not} equal-core: the delivered member's downstream history and
pending window record the delivery, and the model exposes this
difference. The two shapes are complementary. The mechanization also
proves that before the first
resume, no reachable pair of crashed states has equal cores and
ledgers that differ in their delivered payloads, so an equal-core,
payload-divergent pair generated by one crash-timed protocol at
generation zero cannot exist. Theorem~\ref{thm:obsbound} is the
wider information bound over every observation that agrees on its
equal-core pair. Theorem~\ref{thm:checkpoint} is the protocol-specific
bound over the reads a crashed implementation actually has. The two theorems
therefore cover different recovery observations.

On this pair, reading the
\emph{downstream store's content} separates the members because it
contains the delivered copy, so a content reader distinguishes this
pair. That read is not store-local information. It
is a read of the sink, in its cheapest form. But it does
not survive Section~\ref{sec:wall}: on the equal-core pair the
downstream contents agree, so those reads do not distinguish the two states.
Only the sink's per-operation accepted record distinguishes
\emph{both} regimes: the acceptance history, not the current contents.
Reading the sink is therefore two different acts, reading its state
and reading its acceptance history, and the distinction returns as a
theorem boundary in Sections~\ref{sec:memory} and~\ref{sec:store}.

\paragraph{Policy classes and the cursor.}
As in Section~\ref{sec:wall}, the split is exhaustive: every policy on
this machine either factors through the durable-local view (and the
pair defeats it) or distinguishes some two states agreeing on that
entire view, which means it reads the downstream content, the volatile
window, or the ledger. The cursor itself is not the defect: if the
crash lands \emph{after} the
final persist, the cursor reads $2$, which by itself separates that
state's view from both pair members. Outside the deliver-to-persist
window the cursor is accurate. The dilemma is the window, not the
cursor's bookkeeping. That is why no amount of care in maintaining the
checkpoint, and no smarter encoding of it, closes a gap that sits
between two durable writes.

\section{Reading the Sink}\label{sec:door}

The pairs in the last two sections differ in the sink-side accepted
record. A policy that reads this record can compute what is still
missing. The \emph{sink delta} at frontier
$f$ is the committed obligations not yet on the record:
\[
  \sdelta\langle f\rangle(t) \;=\;
  [\, p \leftarrow \replay(H,\scope,f) \mid p \notin \pay(E(t)) \,].
\]
Here $E(t)$ is the model's authoritative, complete record at the
frontier of the decision. That assumption is necessary. A
stale replica, a paginated status endpoint, or a lagging projection of
the sink's state is a \emph{different} observation, and nothing below
applies to it without a separate argument. The guarantee is exact, and
it is conditional. The result requires four conditions:
\begin{itemize}
\item[P1.] \emph{Hazard-free:} $\neg\,\unsafe(t)$ (the record carries
  no duplicate and nothing premature).
\item[P2.] \emph{Crashed:} the run's status is $\Crashed\,c$ for some
  $c$.
\item[P3.] \emph{In-interval:} $f \le \fin$.
\item[P4.] \emph{Ascending:} the committed history's coordinates are
  strictly increasing.
\end{itemize}

\begin{theorem}[Sink-Reading Escape]\label{thm:escape}
On the effect machine, under P1-P4, the sink-delta re-drive at $f$
exists, and every re-drive $t \rdrto{\sdelta}{f} t'$ leaves the store
mismatch-free at $f$, hazard-free, and complete:
\[
  \clean(t',f), \qquad \neg\,\unsafe(t'), \qquad \neg\,\lostp(f, t').
\]
\end{theorem}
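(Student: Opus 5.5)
The proof unfolds the re-drive relation $t \rdrto{\sdelta}{f} t'$ and checks the three conclusions separately. Existence and cleanliness should come first. The re-drive step is guarded by the run being crashed and the frontier lying in the interval. Those guards are exactly P2 and P3, so the step is enabled for any batch, including $\sdelta\langle f\rangle(t)$. Because the re-drive heals the store by construction, $\clean(t',f)$ holds for every policy, as Section~\ref{sec:model} already notes. That conclusion therefore follows directly from the step's definition. Two further facts need recording. The committed history $H$ is unchanged by the step. The new ledger is $E(t') = E(t) \mathbin{@} N$, where $N$ is the list of emissions for $\sdelta\langle f\rangle(t)$, each stamped with $|H|$ and the current generation.

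Completeness comes next, and it is set-level. Take any $p \in \replay(H,\scope,f)$. If $p \in \pay(E(t))$, then $p \in \pay(E(t'))$, because the ledger only grows. Otherwise $p$ passes the filter defining $\sdelta\langle f\rangle(t)$, so it is in $\pay(N)$. Either way $\neg\lostp(f,t')$.

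For $\neg\unsafe(t')$ there are two parts.

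\emph{Nothing premature.} Old entries stay justified because $H$ is unchanged and P1 gives $\neg\prem(t)$. A new entry $x$ has $\stampf(x) = |H|$, and $\mathrm{take}(|H|,H) = H$. Its payload lies in $\replay(H,\scope,f)$, which is a filter of $H$, so $\pay(x) \in H$ and $x$ is justified.

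\emph{No duplicate.} Here P4 enters. It amounts to showing that $\pay(E(t)) \mathbin{@} \pay(N)$ is distinct. This needs three facts:
\begin{itemize}
\item $\pay(E(t))$ is distinct, by P1.
\item $\pay(N)$ is disjoint from $\pay(E(t))$, by the filter's definition.
\item $\pay(N)$ is itself distinct. With strictly ascending coordinates, $H$ is a distinct list, since equal pairs would share a coordinate. Filtering a distinct list preserves distinctness, first for $\replay$ and then for $\sdelta$.
\end{itemize}
Together these give the claim via the standard lemma that the concatenation of two distinct, disjoint lists is distinct.

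The main obstacle I expect is that last bookkeeping step. It requires deriving distinctness of $H$ from strict monotonicity of its coordinates. It also requires matching the machine's stamping function exactly, so that $\pay$ maps $N$ back onto the batch and every stamp equals $|H|$; this depends on $H$ not being modified inside the same atomic step. If the re-drive definition instead stamps with the history length before some internal update, I would first prove a small lemma that the step leaves $H$ fixed, and only then discharge justification.
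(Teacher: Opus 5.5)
Your proposal is correct and follows the paper's own argument. Existence comes from the atomic reconcile, cleanliness from the healing built into every re-drive, and the three verdicts from the shape of $\sdelta$: it emits only committed obligations stamped at the full history, nothing already on the record, and every missing obligation. Your use of P4 to make the batch internally duplicate-free is exactly the role the paper assigns to strictly ascending coordinates.
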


\begin{proof}[Proof idea]
Existence follows from the atomic reconcile. The three conclusions
follow from the delta's shape. It never re-emits a payload already on the record, so no
duplicate occurs. It emits only committed obligations, stamped at the
full history, so nothing premature occurs. It includes every payload still missing
at $f$, so no loss remains. None of the three is automatic: each depends on the batch
being computed from the same record the hazards are judged against.
\end{proof}

On the checkpointed machine of Section~\ref{sec:checkpoint}, the same
policy also separates that section's pair. The proof confirms that the
delta is not durable-local-measured. It uses information outside the
defeated class instead of refining a policy inside it.

\paragraph{Why ascending coordinates are required.}
P4 does double duty. It is what makes the delta's own batch
internally duplicate-free, and it is what licenses the strongest
reading of the conclusion: each committed obligation delivered
\emph{per instance} exactly once. Without it the count is set-level, by
payload. That distinction matters. An equal-coordinate aliasing
counterexample in the mechanization refutes the per-instance strengthening without ascending coordinates at the deduplication
tier (Section~\ref{sec:memory}). One further property comes free and
one does not: the delta emits in committed source order (a proved
fact), while ordering across a real transport is a
different guarantee, which this paper does not provide.
Section~\ref{sec:wire} models exactly where it is lost.

\paragraph{Normal execution.}
The delta chooses the right batch at recovery. A companion result
proves the running discipline that keeps a live, emitting system
hazard-free between recoveries, as three record-local guards:
\begin{itemize}
\item[G1.] \emph{Fire-time justification:} an ordinary publish fires
  only a payload the committed history holds at fire time.
\item[G2.] \emph{Publish freshness:} it fires only a payload not
  already on the record.
\item[G3.] \emph{Fresh reconcile suffix:} a re-drive's suffix is
  internally distinct and wholly absent from the record.
\end{itemize}
Every run keeping the three guards stays hazard-free. The folklore
version of this rule has two guards. A separate counterexample shows
why each guard is required, and
suffix freshness is provably necessary once a re-drive can meet a
record that already holds part of its window. These guards do not
guarantee completeness. An empty re-drive suffix satisfies them while
silently skipping work. The guards constrain what fires, not what
fails to fire.

\paragraph{Interrupted and repeated recovery.}
Under P1-P4, the small-step variant can deliver a prefix of the
sink-delta batch and stop. Within this recovery window, the frontier
and every field except the ledger remain fixed. Recomputing the delta
returns the undelivered suffix. Partial rounds remain hazard-free.
Delivering the remainder and repairing the store completes recovery.

A separate result covers repeated atomic sink-delta recoveries and
\textsf{Resume} steps. Ordinary execution between them must satisfy
the model's guards. From a hazard-free start,
with ascending coordinates at each re-drive, every completed re-drive
is exactly-once at its own frontier. Neither result ensures eventual
completion. The status-only \textsf{Recover} step can mark a run
\Recovered{} without repairing its store. It is distinct from the
sink-delta re-drive of Theorem~\ref{thm:escape}.

The observation bound also holds in the small-step extension,
including finite adaptive sequences of per-entry choices that cannot
read the sink record. It therefore does not require atomic recovery.
Crashes inside the fenced recovery operation of
Section~\ref{sec:wire} remain outside that model.

\paragraph{Detecting omissions.}
Exactly-once has two conjuncts, and operators observe them asymmetrically.
A duplicate is loud: the customer gets two emails, and an incident is
opened. A frontier-relative omission is silent: a cursor advanced
past an undelivered send leaves a record with no hazard at all, and a
checked trace reaches a post-recovery state whose computed
hazard verdict is clean while a committed operation remains undelivered at the
judged frontier. The opening incident was the loud kind. The quiet
kind passes the same store-level audit, which is why
completeness needs to be checked separately.

\section{Stragglers and the Arrival Fence}\label{sec:wire}

Theorem~\ref{thm:escape} was proved on a machine where a send and its
acceptance coincide. Real deliveries travel. They wait in retry queues
and socket buffers, arrive out of order, or vanish. The channel machine
models that traffic, and on it the sink-delta policy of Section~\ref{sec:door}
stops being enough: a read cannot see what is still in flight.

\paragraph{The channel machine.}
A channel state is $u = \langle t, W, A, \varphi \rangle$: an inner
effect state $t$, the \emph{wire} $W$ (emissions sent and not yet
resolved), the sink's durable \emph{accepted record} $A$, and a
\emph{fence} $\varphi \in \mathbb{N}$, zero until set. An ordinary publish adds its
emission to $W$ and to the inner ledger $E$, which here records sends
rather than acceptances. An arrival resolves one in-flight emission, in any order
(the wire reorders freely), and the sink accepts it into $A$ exactly
when its generation clears the fence, $\varphi \le \gen(x)$, dropping
it otherwise. An in-flight emission may instead be lost, as a
first-class step. The bookkeeping is conservative (every sent emission
is at all times accounted for as in flight, accepted, or dropped), and
a crash moves the run's status and nothing else: the wire, the accepted
record, and the fence all survive it, because a process crash does not
empty network buffers. Hazards and completeness are now judged at the
accepted record ($\unsafe_A(u)$ for a premature or duplicate
\emph{accepted} entry, $\alo_A(f,u)$ for at-least-once read there), and
exactness at $f$ is the conjunction
\[
\begin{aligned}
  \eo(f,u) \;&\longleftrightarrow\;
     \neg\,\unsafe_A(u) \,\wedge\, \alo_A(f,u),\\
  \alo_A(f,u) \;&\longleftrightarrow\;
     \replay(H,\scope,f) \subseteq \pay(A(u)).
\end{aligned}
\]
The unfenced policy re-drive studied below atomically repairs the
store and appends its batch to both $E$ and $A$. The batch is accepted
synchronously without changing the wire or fence. Cursor-based
recovery, using the suffix replay described in
Section~\ref{sec:checkpoint}, instead queues its re-sends on the wire.
Theorem~\ref{thm:wire} concerns the unfenced recovery rule just
described. The later fenced rule is outside its scope because it
changes the fence.
Deployed systems keep a cousin of the generation under another name:
Kafka's idempotent producer carries a broker-checked
\emph{epoch}~\cite{kip98}, and this paper uses that word only for the
cited mechanism.

\paragraph{An old request can arrive after recovery.}
Figure~\ref{fig:straggler} shows an old send that remains in flight
across the crash. Recovery reads the accepted record, sends the missing
payload, and resumes in a new generation. The old request then arrives
and the sink accepts it, producing a duplicate. Every entry is
individually justified. The duplicate comes from the run's own valid
send in the previous generation, not from malformed traffic.

When the old request arrives \emph{before} the read, the read sees it
and the delta is empty. This control shows that the sink read is
correct when it occurs. It becomes stale only when the wire still
contains work that can arrive after the read.

\begin{figure}[t]
\centering
\includegraphics[width=\columnwidth]{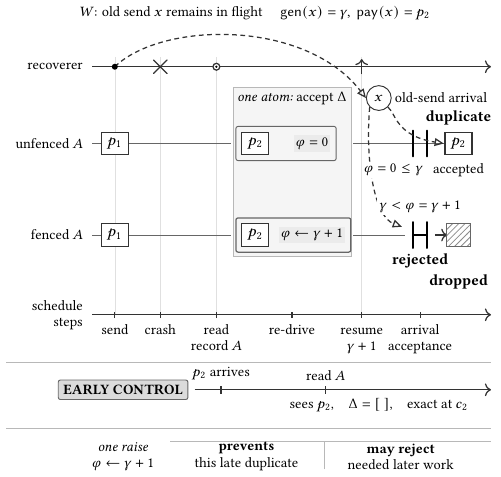}
\Description{A dashed old send spans a crash, a sink read, the
recovery atom, and resume before reaching a circled arrival point. From
there, separate rows show an open unfenced gate accepting a duplicate
payload and a closed fenced gate rejecting the item into a hatched
drop. Lower rails show an arrival-before-read control and the benefit and
limitation of raising the fence.}
\caption{An old send crosses the sink read. Without a fence, its later
arrival duplicates the accepted record. The fenced re-drive atomically
lands the recovery delta and raises $\varphi$, so the arrival is rejected
(Theorems~\ref{thm:wire}-\ref{thm:fence}). The lower rails show the
benign early-arrival control and the rejection of later work by the fence
(Corollary~\ref{cor:price}).}
\label{fig:straggler}
\end{figure}

\begin{definition}[Channel-blind policy]\label{def:blind}
A policy $P$ on the channel machine is \emph{channel-blind} when it
factors through everything but the wire: for some $g$,
\[
  P(u) \;=\; g\bigl(\langle\, t,\; A(u),\; \varphi(u) \,\rangle\bigr)
  \quad\text{for all } u,
\]
where $t$ is the whole inner state, including the store, ledger, and
generation.
\end{definition}

Everything durable is inside the visible triple, the accepted record
and the fence included. The one field outside it is the wire itself.

\begin{theorem}[Wire Bound]\label{thm:wire}
On the channel machine there is a reachable pair $u_1, u_2$ that agree
on the visible triple. One has the crashed process's own send still
in flight, while the other has an empty wire. This pair defeats every
channel-blind policy $P$ under the unfenced re-drive relation: there
are re-drives of both members at $c_2$ whose outcomes satisfy
\[
  \bigl(\exists v.\ u_1' \xrightarrow{\,\mathrm{arrive}\,} v
        \,\wedge\, \unsafe_A(v)\bigr)
  \;\vee\; \neg\,\alo_A(c_2,\, u_2') .
\]
Either the straggler's arrival leaves the first member's accepted
record unsafe, or the second member still lacks a delivery at $c_2$.
\end{theorem}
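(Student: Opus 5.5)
The plan is to reuse the fork argument of Theorem~\ref{thm:obsbound}, this time with the wire as the hidden component. First build the pair. Run the channel machine on the two-event workload: commit $e_1,e_2$, publish $e_1$ and let it arrive, so $(c_1,e_1)$ lands in $A$ with $\varphi=0$. Then publish $e_2$ and crash at $c_2$; after the crash, $(c_2,e_2)$ is in $E$ and on the wire. In $u_1$ the emission stays in flight. In $u_2$ we apply the first-class loss step, which removes it from $W$ and records it as dropped. Crash and loss touch neither the inner state $t$ nor $A$ nor $\varphi$, so the visible triples $\langle t, A, \varphi\rangle$ are equal. The two members differ only in $W$: $W(u_1)=[x_2]$ and $W(u_2)=[\,]$. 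Reachability is a concrete trace check. The conservative-bookkeeping invariant has to hold in both states, which is why I would use the loss step for $u_2$ rather than simply deleting the entry.

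Next, fix a channel-blind $P=g\circ\langle t,A,\varphi\rangle$. It returns the same batch $B$ on both members. The unfenced re-drive at $c_2$ needs the run to be crashed and $c_2\le\fin$, and both members satisfy this. So re-drives $u_i\rdrto{P}{c_2}u_i'$ exist, each appending $B$ to $E$ and $A$ and leaving $W$ and $\varphi$ unchanged. I would then split on whether $p_2=(c_2,e_2)\in\pay(B)$. If it is, then $A(u_1')$ contains $p_2$ and $x_2$ is still on the wire. Since $\varphi=0\le\gen(x_2)$, the arrival step accepts $x_2$ into $A$, giving a $v$ whose record holds $p_2$ twice, so $\dupl_A$ and hence $\unsafe_A(v)$. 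If $p_2\notin\pay(B)$, then $A(u_2')=A(u_2)\mathbin{+\!\!+}B$ does not contain $p_2$, because $A(u_2)=[(c_1,e_1)]$. Since $p_2\in\replay(H,\scope,c_2)$, this gives $\neg\alo_A(c_2,u_2')$.

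The case split needs no premature third case, because the disjunction already covers every batch: either $B$ carries $p_2$ or it does not. I expect the main obstacle to be the bookkeeping in the first case. Arrival must be enabled after the re-drive, so the re-drive must genuinely leave $W$ and $\varphi$ untouched. Also, accepting $x_2$ must produce a repeated \emph{payload} in $A$, and that should be judged by $\pay$ rather than by emission identity, since $x_2$ and the re-driven copy carry different stamps and generations. I would first prove small frame lemmas (``re-drive preserves $W,\varphi$'' and ``crash and lose preserve $t,A,\varphi$'') and then discharge the two cases by unfolding $\dupl$ on the concrete record.
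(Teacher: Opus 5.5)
Your proof is correct and follows the paper's own argument. The pair differs only in whether the crashed run's send $x_2$ is still on the wire, and the common batch either carries $p_2$, so the straggler is admitted at the never-set fence $\varphi=0$ and $A$ repeats a payload, or omits it, so the empty-wire member stays incomplete at $c_2$. Two harmless slips: in your construction the straggler and the re-driven copy carry the \emph{same} stamp and generation, because re-drives stamp at the crash-time generation; and the crash does change the status inside $t$, identically on both members, so the only frame lemma you actually need is the one for the loss step.
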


Because the policy cannot observe the wire, it chooses the same
batch for both states. If the
batch carries the missing payload, the in-flight member's straggler
arrives at the never-set fence and the record duplicates. If it omits
it, the empty-wire state has no pending request that could deliver the
missing event and remains incomplete. The
sink-delta policy of Section~\ref{sec:door} is in the defeated
class ($\sdelta$ reads the accepted record and never the wire), and
the late-arrival example above demonstrates this failure. A variant of the
opening example has the same hazard: a send left in a retry queue at
crash time can produce a duplicate after an otherwise correct sink
read. The hazard is not an artifact of the model.
The mechanization separates the late-arrival example's final record from
everything the instant-delivery machine can reach, so the guarantee of
Section~\ref{sec:door} was not false on its own machine. That machine
could not express this failure.

\paragraph{The arrival fence.}
The \emph{fenced re-drive} at $f$ is one atomic act: heal the store,
compute $\sdelta$ from the accepted record, land the batch in the sent
ledger and synchronously in $A$ (a recovery's own re-sends never
traverse the fenced wire), and set the fence to the successor of the
crash-time generation, $\varphi := \gamma + 1$. From then on the sink
rejects arrivals from generations below the fence. The guarantee has four
conditions corresponding to
P1-P4, read at the accepted record (record
hazard-free, crashed, in-interval, and ascending coordinates, called
Q1-Q4) and, for the continuation results below, the machine's
reachability invariants.

\begin{theorem}[Arrival Fence]\label{thm:fence}
On the channel machine, under Q1-Q4, the fenced re-drive at $f$
exists, and every fenced re-drive $u \rightsquigarrow u'$ leaves the
store mismatch-free at $f$, is exactly-once at $f$ on the accepted
record, and sets the fence to the successor generation:
\[
  \clean(u',f), \qquad \eo(f,u'), \qquad
  \varphi(u') = \gen(u) + 1.
\]
\end{theorem}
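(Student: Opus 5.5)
The plan mirrors the proof of Theorem~\ref{thm:escape}, with the ledger $E$ replaced by the accepted record $A$ wherever hazards and completeness are judged. Existence comes first. Because the fenced re-drive is one atomic act, its enabling conditions reduce to those of the store-healing reconcile: the run is $\Crashed\,c$ (Q2) and $f \le \fin$ (Q3). I will exhibit the successor state explicitly. Its store is reconciled to the committed image at $f$. The batch $\sdelta\langle f\rangle$ is computed from $\pay(A(u))$ and stamped with $|H|$ and $\gen(u)$, then appended to both $E$ and $A$. The fence is set to $\gen(u)+1$, and the wire is left untouched. Two of the three conclusions are then read off the construction. The conjunct $\varphi(u') = \gen(u)+1$ holds by definition. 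The conjunct $\clean(u',f)$ is inherited from the inner re-drive, exactly as every re-drive in Section~\ref{sec:model} heals the store, whatever its batch.

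The substance is $\eo(f,u')$. I split it into $\neg\,\unsafe_A(u')$ and $\alo_A(f,u')$ and argue on $A(u') = A(u) \mathbin{+\!\!+} B$, where $B$ is the stamped batch.

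\emph{Completeness.} Every $p \in \replay(H,\scope,f)$ is either already in $\pay(A(u))$ or, by the filter defining $\sdelta$, in $\pay(B)$.

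\emph{No premature entry.} Old entries of $A(u)$ are justified by Q1. I must check that justification is stable, i.e.\ that the re-drive does not shrink $H$, which holds because no machine here shrinks $H$. Each new entry has stamp $|H|$ and a payload drawn from $\replay(H,\scope,f)$, which is a sublist of $H = \mathrm{take}(|H|,H)$.

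\emph{No duplicate.} Old entries are pairwise distinct by Q1. New entries avoid $\pay(A(u))$ by the filter. New entries are pairwise distinct because $\replay$ is a filter of $H$, and Q4 makes the coordinates of $H$, hence its payloads, distinct. This last point is the only use of Q4, and I will state it as a small lemma: strictly increasing coordinates imply \texttt{distinct} on the payload list, and \texttt{distinct} is preserved by filtering.

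The main obstacle is not the combinatorics above but the bookkeeping around the fence and the wire. The statement judges $u'$ itself, not later states, so in-flight stragglers do not matter here. I must still confirm that the fenced step does not itself resolve any wire entry into $A$, so that $A(u')$ really is $A(u) \mathbin{+\!\!+} B$. I expect this to follow by unfolding the transition. If the mechanized relation is instead defined relationally, with the batch as a parameter constrained to equal $\sdelta$, I will first prove a characterization lemma: every fenced re-drive from $u$ has this shape. After that, ``every re-drive'' reduces to the single constructed successor. The continuation results that use the reachability invariants are not needed for this statement.
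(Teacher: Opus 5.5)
Your proposal is correct and takes essentially the paper's route. The fenced composite is the sink-delta re-drive of Theorem~\ref{thm:escape}, with the delta computed from, and judged against, the accepted record $A$. Cleanliness and $\varphi(u')=\gen(u)+1$ follow directly from the atomic step. Exactness splits three ways: completeness comes from the delta's filter, justification from stamping at $|H|$, and duplicate-freedom from Q1, the filter's disjointness from $\pay(A(u))$, and the distinct coordinates guaranteed by Q4.

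You are also right on two details: the recovery's own batch lands synchronously in $A$, bypassing the wire's fence check, and the reachability invariants are needed only for the continuation corollary, not for this statement.
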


The theorem assumes strictly ascending coordinates because
duplicate-freedom of the accepted record depends on them. The theorem
certifies this admission mechanism under the machine's crash-time
stamping discipline. It does
not say a generation fence is the only possible way to neutralize a
wire. A sink that atomically deduplicates on stable event identity at
admission, for instance, is a different mechanism with different
assumptions, unmodeled here.

\paragraph{Fence placement.}
The composite's shape is forced by a discipline the machine inherits:
re-drives stamp at the crash-time generation (the reconcile is
generation-constant, and resume bumps the counter only afterwards), so
an earlier process run's in-flight emission and the recovery's own
re-drive carry the \emph{same} generation, and no fence value separates
them on the wire. Set the fence at resume instead, and the straggler
slips in during the gap. Setting it in a separate step before the
re-drive would cause a recovery whose own sends traveled the wire to
fence out its own work. Synchronous acceptance plus an atomic rise is the one sound
point in this rule family. It is a design argument over the stamping
discipline, while the two refuted
reorderings that \emph{are} theorems live one machine over, in
Section~\ref{sec:claim}. Figure~\ref{fig:fences} places the sound rise
beside its failing neighbors.

The fence's policy is still the channel-blind $\sdelta$. Nothing reads
the wire. That is the central point: more recovery-side reading
does not flip the verdict (that was Theorem~\ref{thm:controlplane}),
while the same generation tag, consulted at \emph{acceptance}, does.
The fence works by changing what the sink admits, not by reading more.

\paragraph{Cost of the arrival fence.}
The following corollary shows that fencing can reject a delayed
delivery needed at a later frontier.

\begin{corollary}[Rejection of later work]\label{cor:price}
There is one schedule over fenced and unfenced twins that differ by one
rule. Recovery occurs at frontier $c_1$ with a straggler for $c_2$ in
flight. On this schedule, the unfenced run's straggler arrives and completes
the record at $c_2$, exactly-once there with the payload counted once,
while the fenced run drops the same straggler: its result remains incomplete at
$c_2$, and exactness holds at the fence's own frontier $c_1$.
\end{corollary}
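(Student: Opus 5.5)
The corollary is existential, so I will prove it by building and running one concrete schedule on the channel machine. The workload is the running two-event instance: commit $(c_1,e_1)$ and $(c_2,e_2)$ with strictly ascending coordinates, and one key $k\in\scope$. The schedule in generation $0$ is as follows.
\begin{enumerate}
\item Publish $(c_1,e_1)$ and let it arrive, so it is accepted into $A$.
\item Publish $(c_2,e_2)$ and leave it in flight on $W$.
\item Crash, with the interval end $\fin \ge c_2$.
\end{enumerate}
Both twins start from this same crashed state. They differ only at the recovery rule. The fenced twin takes the fenced re-drive at $c_1$; the unfenced twin takes the unfenced sink-delta re-drive at $c_1$. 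After recovery, both twins resume and then take the arrival step for the straggler $\langle n, 0, (c_2,e_2)\rangle$. I will state all of this as one schedule (a list of labelled actions) applied to two machines that share every rule except re-drive. Reachability of each intermediate state then follows by stepping the transition relation, and I expect the Isabelle side to discharge it by evaluation.

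The first step is the recovery. At frontier $c_1$ we have $\replay(H,\scope,c_1) = [(c_1,e_1)]$, which is already in $\pay(A)$, so the sink delta is empty in both twins. For the fenced twin, Q1–Q4 hold at the crashed state: the record is hazard-free, $c_1 \le \fin$, and coordinates ascend. Theorem~\ref{thm:fence} therefore gives existence of the fenced re-drive, $\eo(c_1,u')$, and $\varphi(u') = 1$. The unfenced twin leaves $\varphi = 0$ and the wire unchanged; its exactness at $c_1$ holds by the same empty-delta reasoning.

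The second step is the straggler's arrival. Resume changes only status and generation, and the wire survives. In the unfenced twin, $\varphi = 0 \le 0 = \gen(x)$, so the straggler is accepted and $A$ has payloads $[(c_1,e_1),(c_2,e_2)]$. I then check $\eo(c_2,\cdot)$ in three parts:
\begin{itemize}
\item \emph{at-least-once:} $\replay(H,\scope,c_2)$ is covered by $\pay(A)$;
\item \emph{no duplicate:} the two payloads are distinct, by ascending coordinates;
\item \emph{no premature entry:} the stamp $n=2$ was taken after both commits, so $\just$ holds.
\end{itemize}
The count of $(c_2,e_2)$ in $\pay(A)$ is then computed to be $1$. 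In the fenced twin, $1 \not\le 0$, so the arrival drops the straggler. $A$ stays $[(c_1,e_1)]$, hence $(c_2,e_2)\notin\pay(A)$ and $\neg\alo_A(c_2,\cdot)$. Exactness at $c_1$ is preserved, because dropping touches neither $H$ nor $A$.

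The main obstacle is the justification check for the straggler's stamp in the unfenced twin. The emission was stamped in generation $0$ with the history length at fire time, so I must make sure $(c_2,e_2)$ is committed before it is published; otherwise the accepted entry would be premature. I will fix this by ordering both commits before the second publish. The remaining risk is that the resume step or the reachability invariants constrain the wire, for example by requiring the wire to be empty before resume. If they do, I would let the arrival happen before resume, since resume does not change the fence, and the fenced twin still rejects it.
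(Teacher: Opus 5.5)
Your proposal is correct and takes the same route as the paper. The corollary is witnessed by one concrete schedule run on twin machines that differ only in the recovery rule. The sink delta at $c_1$ is empty, so the fenced rule raises $\varphi$ to $1$ and drops the generation-$0$ straggler carrying $(c_2,e_2)$, while the unfenced rule keeps $\varphi=0$, so that straggler is accepted once and completes the record at $c_2$. Your side checks are the facts that evaluating the trace discharges: committing before the second publish keeps the straggler justified, and resume and the dropping arrival touch neither $H$ nor $A$, so exactness at $c_1$ persists.
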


The fence guarantees exactly-once acceptance at the chosen frontier.
It can still leave later work undelivered by rejecting the old request
that would have delivered it.

\begin{corollary}[Residual-wire stability]\label{cor:wire-stability}
Under Q1-Q4 and the machine's reachability invariants before a fenced
re-drive at $f$, every wire entry left after the composite carries a
generation below the fence, and along any subsequent trace of arrivals
and losses only, the accepted record is unchanged and exactness at $f$
persists.
\end{corollary}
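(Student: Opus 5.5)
The proof splits into two parts. First we show that every wire entry left behind by the composite is below the fence. Then we show, by induction over arrival and loss steps, that the accepted record does not change. Exactness at $f$ then follows from Theorem~\ref{thm:fence}.

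\emph{Step 1 (generation bound on the wire).} We first need a reachability invariant on the channel machine. In every reachable state, every emission $x$ on the wire $W$ satisfies $\gen(x) \le \gen(u)$, where $\gen(u)$ is the current inner generation. The induction over the transition rules is routine. An ordinary publish stamps its emission with the current generation. Arrivals and losses only remove entries. A crash changes nothing but status. Resume only raises the generation, so the inequality is preserved. The fenced re-drive never puts anything on the wire, because its batch lands synchronously in $E$ and $A$.

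Now take a state $u$ just before the fenced re-drive. Under Q2, $u$ is crashed. The composite leaves $W$ untouched and leaves the generation unchanged, since the reconcile is generation-constant. By Theorem~\ref{thm:fence}, it sets $\varphi(u') = \gen(u)+1$. So every residual wire entry $x$ satisfies $\gen(x) \le \gen(u) < \varphi(u')$.

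\emph{Step 2 (stability under arrivals and losses).} Take any trace $u' = v_0 \to v_1 \to \dots \to v_n$ whose steps are all arrivals or losses. We prove by induction on $n$ that three things hold at every $v_i$:
\begin{itemize}
\item $A(v_i) = A(u')$;
\item $\varphi(v_i) = \varphi(u')$ and the inner state $t$ is unchanged;
\item every entry of $W(v_i)$ has generation below $\varphi(v_i)$.
\end{itemize}
The last point holds because $W(v_{i+1}) \subseteq W(v_i)$, and neither kind of step touches $\varphi$ or $t$.

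Consider an arrival at $v_i$. It resolves some $x \in W(v_i)$. Since $\gen(x) < \varphi(v_i)$, the acceptance guard $\varphi \le \gen(x)$ fails, so the rule drops $x$ and leaves $A$ unchanged. A loss removes $x$ from the wire and likewise leaves $A$ unchanged. This argument needs the rule definitions to confirm that arrivals and losses modify only $W$ and $A$, together with the conservative bookkeeping of sent/accepted/dropped counts. That bookkeeping is not part of $\eo$, so it does not matter here.

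\emph{Step 3 (exactness persists).} The predicate $\eo(f,\cdot)$ reads only three things: $H$ and $\scope$ from the inner state, and the accepted record $A$. All three are unchanged along the trace. Theorem~\ref{thm:fence} gives $\eo(f,u')$, so $\eo(f,v_n)$ holds as well.

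The main obstacle is Step 1. We must show that the generation bound on the wire really is a reachability invariant, and that the pre-re-drive state satisfies it under the stated hypotheses. The delicate case is resume, together with the fact that the composite stamps at the crash-time generation rather than a bumped one. If the invariant is stated as $\gen(x) \le \gen(u)$ and verified rule by rule, everything else is a direct unfolding of the arrival and loss rules.
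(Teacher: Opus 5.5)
Your proof is correct and takes the same route as the paper. A reachability invariant bounds every wire entry's generation by the current generation, and the composite leaves the wire untouched while setting the fence to $\gen(u)+1$. Arrivals below the fence are then dropped and losses never touch $A$, so $\eo(f,\cdot)$, which reads only $H$, $\scope$, and $A$, persists. Your Step~1 case list leaves out the cursor-based recovery rule, which does put re-sends on the wire. That rule stamps them at the current crash-time generation, so $\gen(x)\le\gen(u)$ is preserved there too, and the corollary in any case takes the reachability invariants as a hypothesis.
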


Stale arrivals cannot be admitted by reordering or delay, and losses
cannot alter $A$. The corollary is scoped to arrival/loss-only
continuations on purpose: it says nothing about arbitrary later
application work. Theorem~\ref{thm:fence} applies to a later recovery
only if its conditions hold at that crash.
Between recoveries, an in-flight entry is neither delivered nor lost at
a completed frontier. The model keeps that ambiguity first-class rather
than resolving it by fiat.

\paragraph{Validation controls.}
Two companion results locate the modeling choices. On a twin machine
whose crash wipes the wire, the same schedule ends exactly-once with a
duplicate-free record. Crash-survival of the wire is therefore a
required assumption and a realistic choice because crashes do not empty a
network's buffers or a sidecar's retry queue. And the early-arrival
schedule above calibrates from the other side: arrival before the read
is benign, so nothing weaker than arrival-after-read defeats reading.
The model treats the fenced re-drive as one atomic transition and does
not cover a crash inside it (Section~\ref{sec:mech}).

\section{Concurrent Recoverers and the Claim Fence}\label{sec:claim}
\begin{figure*}[t]
\centering
\includegraphics[width=\textwidth]{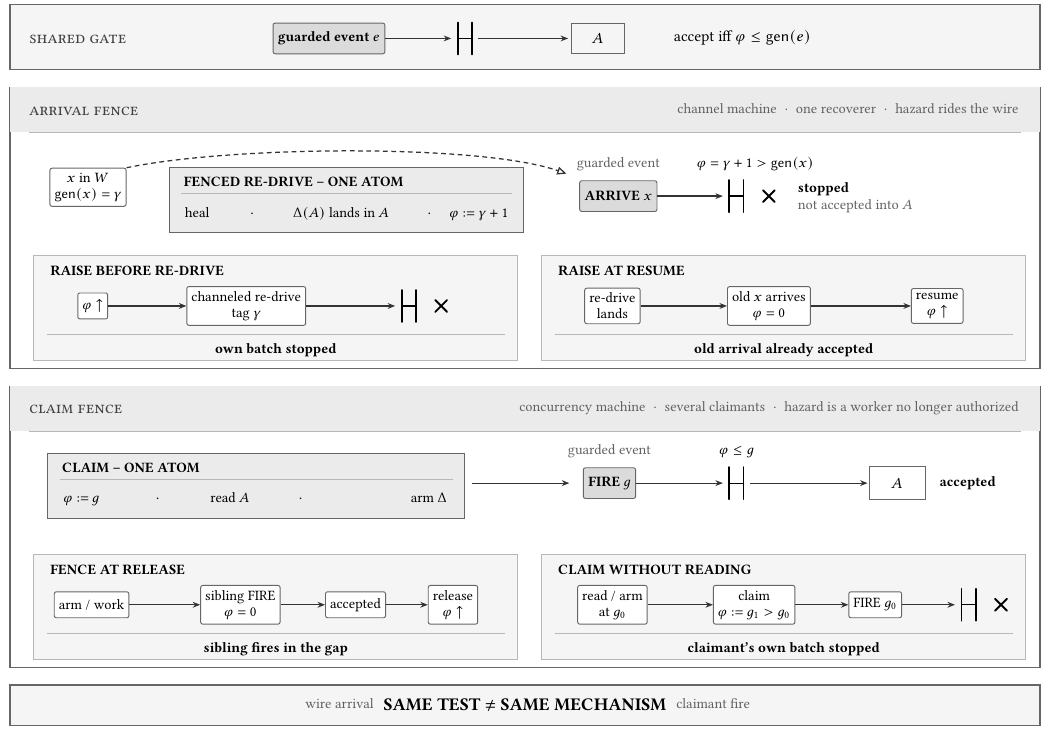}
\Description{A shared-gate key sits above two full-width mechanism
rows. The arrival-fence row shows an old wire item rejected after an
atomic fenced re-drive, with separate cards for raising before re-drive
and at resume. The claim-fence row shows a fire accepted after an
atomic claim, with separate cards for fencing at release and claiming
without reading. A footer says that the same test does not imply the
same mechanism.}
\caption{Arrival and claim fences use the same acceptance test for
different events. The channel machine fences arrivals after an atomic
re-drive. The concurrency machine fences fires after an atomic claim.
The side cards show invalid reorderings. Only safety transfers between
the two regimes.}
\label{fig:fences}
\end{figure*}

A second hazard family comes from orchestration. An orchestrator that
loses a worker's heartbeat can start a second recovery while the first
is still alive: two recoverers, each reading honestly, each re-driving
what it read. This section is about that pair, on a machine built for
it.

\paragraph{The concurrency machine.}
The machine carries
$w = \langle \sigma, J, S, A, \varphi, \Gamma, \mathit{hwm},
\mathit{fl} \rangle$. Its fields are the store, a \emph{journal} $J$ (the record
justification is measured against, whose role sharpens in
Section~\ref{sec:memory}), \emph{sent} and \emph{accepted} ledgers, a
fence, a \emph{generation table} $\Gamma$ tracking each spawned
generation as producing or armed with a batch, the high-water mark
$\mathit{hwm}$ (the newest spawned generation), and a retention
\emph{floor} $\mathit{fl}$. Delivery happens at the \emph{fire point}:
an armed generation $g$ fires its batch, which always lands in $S$ and
lands in $A$ exactly when the fence has not moved past it,
$\varphi \le g$. The sink delta lifts to this machine, computed against
the accepted record over what retention has kept. There is no wire
here by design. The hazards of this machine come from \emph{workers that remain alive
after losing recovery authority}, not from messages in flight, and the two
hazard families do not substitute for each other. Before anything new
begins, the observation bound reappears in a strengthened form. A policy
reading the store, the journal's length, the fence, the whole
generation table, the high-water mark, and the floor (everything short
of the ledgers) is still defeated on a designed pair of this machine.

Three authorities must stay separate on this machine, because
Section~\ref{sec:memory} pulls them apart. The live source history $H$
sits inside $\sigma$. The retention-relative obligation history is
$H_{\mathit{ret}}=\mathrm{drop}(\mathit{fl},H)$. Delta, loss,
at-least-once, and exactness are evaluated against replay from
$H_{\mathit{ret}}$. The journal $J$ is the authority that justifies
emissions, while $A$ detects prematurity and duplication. Exactness on
this machine is therefore \emph{retained-view-relative} unless an additional
condition identifies $J$, $H$, and $H_{\mathit{ret}}$, a relativity this
section will state precisely rather than gloss.

\paragraph{Fence-free one-shot recovery.}
The negative's subject is the natural discipline (each recoverer
reads, arms once, fires once) run twice in parallel without a fence.

\begin{theorem}[Second-Recoverer Bound]\label{thm:second}
On the concurrency machine there is a reachable crashed state $W$
with
\[
  \begin{gathered}
  \varphi(W) = 0, \quad g_1 \neq g_2 \text{ both live}, \quad
  \neg\,\unsafe_A(W),\\[-2pt]
  \sdelta\langle f\rangle(W) = [\,p\,].
  \end{gathered}
\]
The state has four properties: the fence is unset, two generations are
live, the state is hazard-free, and exactly one delivery is missing at
$f$. For every
fence-free pair of one-shot recoverers, however each computes its batch
with the whole state readable, some schedule in the constrained family
ends in a hazard or with the delivery at $f$ still missing. In this family,
both recoverers arm, at least one fires, at least one heals, and every
recoverer that heals also fires.
\end{theorem}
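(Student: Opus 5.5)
The proof has two parts: first build the state $W$ explicitly, then defeat an arbitrary fence-free pair of one-shot recoverers by a case split on the two batches. For $W$, take the running two-event workload on the concurrency machine. Commit $e_1$ and $e_2$, and let one generation fire $(c_1,e_1)$ so it lands in both $S$ and $A$. Then crash with the floor at $0$, so that $H_{\mathit{ret}}=H$, and keep $J=H$ so every emission is justified. Finally, spawn two generations $g_1\neq g_2$ while leaving $\varphi=0$. Reachability is a concrete trace. Checking $\varphi(W)=0$, the liveness of $g_1$ and $g_2$, $\neg\,\unsafe_A(W)$ and $\sdelta\langle f\rangle(W)=[\,p\,]$ with $p=(c_2,e_2)$ and $f=c_2$ is evaluation of these definitions on one state.

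For the defeat, fix arbitrary batch functions $B_1,B_2$ and write $b_i=B_i(W)$, the batch recoverer $i$ arms. Since the recoverers are one-shot and both read the same state $W$ before arming, each $b_i$ is just a list of payloads. We schedule both arms first, so neither batch can depend on the other's effects. The case split, mirroring the three-way split used for Theorem~\ref{thm:controlplane}, is as follows.

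\emph{Case 1: both batches contain $p$.} Schedule both fires, then both heals. Since $\varphi=0\le g_i$, both fires land in $A$, so $p$ appears twice and $\dupl$ holds on the accepted record.

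\emph{Case 2: some batch contains a payload not justified by $J$ at fire time.} Schedule only that recoverer's fire and heal, which the family allows (at least one fires, and every healer fires). $\prem$ then holds on $A$.

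\emph{Case 3: some batch contains a payload already in $A$,} for example $(c_1,e_1)$. Firing it alone yields a duplicate.

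\emph{Case 4: otherwise.} Every batch is a duplicate-free, justified subset of $[\,p\,]$, and at most one batch contains $p$. If neither contains $p$, firing either leaves $p\notin\pay(A)$, so the delivery at $f$ is missing. If exactly one does, say $b_1$, then $b_2$ is empty, or consists only of payloads already treated in the earlier cases. Schedule $g_2$ to arm, fire and heal, and let $g_1$ arm but never fire. The family permits this: both arm, at least one fires, at least one heals, and the only healer also fires. The end state still lacks $p$ at $f$.

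One more check is needed across all cases: healing must not change $A$ and must not fence, since the pair is fence-free. That way the final hazard and loss verdicts depend only on which batches fired into $A$.

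The main obstacle is the family constraint combined with the requirement that "$p$ still missing" be a stable, judged outcome. In Case 4 we need a schedule in which the $p$-carrying recoverer arms but never fires while the other both fires and heals. So the lemma to prove first is that the family admits a non-firing, non-healing armed recoverer. I would also double-check that arming alone has no effect on $A$ or on $\sdelta$, so that omitting a fire is not silently compensated elsewhere. Once that holds, each case reduces to a short concrete computation over $A$, which Isabelle should discharge by simplification after a case split on membership of $p$ in $b_1$ and $b_2$.
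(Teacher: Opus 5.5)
Your argument is the paper's: both recoverers arm first, and the rest of the schedule depends only on where $p$ sits in the two armed batches. When exactly one batch carries $p$, you use the family's single permitted silent death.

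One step fails as written. In Case~1 you schedule both fires and then \emph{both} heals. The model permits only one store repair per crash, so the second heal is not enabled and the schedule is not an execution. The fix is the paper's schedule: both arm, one recoverer heals and fires, and the other fires without healing. The family allows this because it requires every \emph{healer} to fire, not every firer to heal. Since $\varphi$ stays $0$, both batches land in $A$ and the duplicate results. In your ``neither'' subcase you should likewise state that one recoverer heals and fires, so the family's ``at least one heals'' clause is met.

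Your Cases~2 and~3 are unnecessary. Whenever fewer than two batches carry $p$, these schedules leave $p\notin\pay(A)$ whatever else the batches contain, and healing never touches $A$. The outcome is then already a loss, so the paper's three-way split on $p$ (both, neither, exactly one) suffices.

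Your Case~4 claim that every remaining batch is duplicate-free does not follow from Cases~1--3. For example, $b_1=[p,p]$ with $b_2=[\,]$ escapes all three. This is harmless, because the $p$-carrier never fires in that subcase.

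The ``lemma'' you single out as the main obstacle needs no proof. The constrained family is defined to permit one recoverer that arms and then dies silently.
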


The defeating schedule is four or five steps, and which steps the
adversary needs depends only on where the missing payload sits in the two
armed batches. If both batches carry it, let one recoverer heal and
both fire. The accepted record then holds the payload twice. If neither
carries it, the same steps leave the delivery incomplete. If exactly one
carries it, let that recoverer die silently after arming (the family
permits one silent death) while the other heals and fires its
payload-free batch. The store is repaired, the run reads recovered, and
the delivery is gone.
Each recoverer fixed its batch before the schedule chose the cell. The
family's constraints exclude trivial defeating schedules. They cannot
kill both recoverers and call the loss a defeat, and a recoverer that heals
cannot then vanish (every healer fires). As everywhere in this paper,
the defeat is existential over schedules and says nothing about
disciplines outside the fence-free one-shot class. The positive result
below uses such a discipline. The sink-delta policy of
Section~\ref{sec:door} also fails on this axis. When run by two claimless
recoverers, the sink delta double-fires. Two honest workers, each
individually correct by Section~\ref{sec:door}'s rule, are jointly
wrong: each read was true when taken and stale by the time it was
used.

\paragraph{What ordering changes.}
Suppose each recoverer must prepare its batch, repair the store, and
then send, with no step skipped. The model permits only one store
repair per crash, so these ordered schedules contain at most one
completed recovery. At the constructed state, a completed sink-delta
recovery is exactly-once. The duplicate in
Theorem~\ref{thm:second} therefore requires a recoverer to send without
completing its store repair.

Ordering does not ensure that a batch contains the missing work.
Fully ordered schedules exist for every pair of batch choices. If one
batch omits the missing payload, there is an ordered schedule in
which that worker completes while the other stops after preparing
its batch. The delivery remains missing. The original schedule family
also includes these ordered schedules.

\paragraph{The claim fence.}
An atomic \emph{claim} by generation $g$ requires $\varphi \le g$, so
claims never lower the fence. The claim sets the fence to $g$ and arms
the batch computed from the accepted record read in that step. The later
fire is accepted only if $\varphi \le g$, meaning it
\emph{completed at the fence}.

\begin{lemma}[Claim-fence safety]\label{lem:claim}
Every claim-disciplined trace from the machine's initial state is
hazard-free.
\end{lemma}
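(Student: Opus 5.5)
The proof is an induction over claim-disciplined traces from the initial state. The induction hypothesis strengthens $\neg\,\unsafe_A$ into an invariant $\mathcal{I}(w)$ that is preserved by every step the discipline allows. The initial state satisfies $\mathcal{I}$ trivially, because its ledgers and generation table are empty and the fence is zero. Hazard-freedom then follows because $\mathcal{I}$ implies it.

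I would let $\mathcal{I}$ conjoin the following facts:
\begin{itemize}
\item[(i)] Every entry of $A$ is justified against the journal $J$.
\item[(ii)] $A$ has no repeated payload.
\item[(iii)] $\varphi \le \mathit{hwm}$, and every armed generation $g$ with $g < \varphi$ is \emph{stale}: its fire can land only in $S$.
\item[(iv)] The freshness clause. There is at most one armed generation $g$ with $\varphi \le g$, namely the last claimer. Its batch is internally distinct, consists of justified payloads, and is disjoint from $\pay(A)$.
\end{itemize}
Clause (iv) is the record-level analogue of the guards G1--G3 from Section~\ref{sec:door}, transplanted to the claim step.

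The step cases are then checked one constructor at a time.

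\textbf{Claim by $g$.} The step requires $\varphi \le g$ and sets $\varphi := g$. Any previously fence-clearing armed generation $g' \ne g$ now satisfies either $g' < g$, so it becomes stale, or $g' > g$. In the second case it would have to be a later claimer, and I expect to exclude that using the guard $\varphi \le g$ together with monotonicity of claims. The batch is $\sdelta$ computed from the $A$ read in the same atomic step. As in Theorem~\ref{thm:escape}, it is therefore disjoint from $A$ and justified. It is also internally distinct by construction, via the filter-over-replay shape. If P4 is unavailable in the lemma's hypotheses, I would instead take the delta with a $\mathrm{remdups}$.

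\textbf{Fire by $g$.} If $g < \varphi$, only $S$ changes and $\mathcal{I}$ is preserved. Otherwise $g$ is the unique fresh claimer by (iv). Appending its batch keeps (i) and (ii), since the batch is justified and disjoint from $A$. Afterwards $g$ must be marked spent, or at least its batch must be treated as already in $A$, so that a second fire cannot duplicate. I would check that the generation table moves $g$ out of the armed state on fire.

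\textbf{Other steps.} Commits, store heals, spawns and resumes leave $A$ and $\varphi$ alone. For these I only need (a) that the journal grows append-only, so justification is monotone, and (b) that spawning a new generation above $\mathit{hwm}$ does not arm it.

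The main obstacle is the floor $\mathit{fl}$ and the retention-relative delta. Justification is measured against $J$, while the delta is computed over $\mathrm{drop}(\mathit{fl},H)$. I must therefore show that the claim-time delta never emits a payload absent from $J$ at its stamp. If the floor could rise between claim and fire, freshness would have to survive that gap. I would first try to prove that raising the floor only removes candidates from future deltas and never alters $J$ or $A$, so (i) and (iv) are untouched. If $J$ can itself be truncated, I would weaken (i) to justification at fire time against the journal prefix captured at claim time. The secondary risk is the uniqueness part of (iv). If the machine allows a claim at $g = \varphi$ by a generation equal to the current holder, I would need the generation table to forbid re-arming an already armed generation.
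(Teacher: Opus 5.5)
\paragraph{Gap: steps that write $A$ are not covered.}
Your invariant has the right shape: journal justification and duplicate-freedom of $A$, staleness of armed generations below the fence, $\varphi \le \mathit{hwm}$, and freshness of the one authorized armed batch. The step analysis, however, leaves out exactly the cases the lemma's discipline exists to control. You say every step other than a claim or a fire leaves $A$ and $\varphi$ alone. Yet the permitted executions contain other writers of the accepted record: ordinary delivery, the cursor-recovery prepare/repair/send sequence, and snapshot effects. Lost publishes and release are also never treated as cases.

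Clause (iv) does not survive these steps for free. Suppose generation $g$ claims with $p$ in its batch, and an ordinary send lands $p$ in $A$ before $g$ fires. Then $g$ still clears the fence, and its fire duplicates $p$. A cursor re-drive started inside the claim window does the same. The lemma holds only because the discipline forbids this:
\begin{itemize}
\item Ordinary delivery must send committed payloads absent from $A$, and must pause while any prepared batch is still authorized by the fence. You cite G1--G3 only as an analogy for (iv), never as guards on a step.
\item The cursor sequence must use a duplicate-free suffix absent from $A$. It may not start while another batch is authorized, and it must run without intervening actions.
\item Snapshot effects must match the state reconstructed from a journal prefix. This is what keeps clause (i) true across a snapshot step.
\end{itemize}
These hypotheses must appear in your induction. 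Without them the inductive step for (iv) is false, not merely unproved.

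\paragraph{Smaller corrections.}
The lemma carries no ascending-coordinate hypothesis, so the claim batch is not internally distinct by construction. A filter over $\replay$ keeps a pair that the history repeats. Wrapping the delta in $\mathrm{remdups}$ changes what the claim computes. The hypothesis to use is the discipline's explicit requirement that a claim prepares a duplicate-free batch.

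Your stated main obstacle largely dissolves once you read the exclusions. Truncating crashes and independent fence raises are not permitted, so $J$ only grows (monotone justification, obligation (J)), and the fence moves only at claims. Retention moves only the floor and touches neither $J$ nor $A$. A claimer sets the fence to its own generation, and the fence never decreases. Hence any armed generation that clears the fence equals $\varphi$, which settles your $g' > g$ case. You do still need to confirm that a fire consumes its armed batch, as you suspected.

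\paragraph{Comparison with the paper.}
The paper proves this safety argument once against the abstract acceptance interface, with obligations (T) and (J), and instantiates it by a pullback from the fire point. It does not induct on the concrete machine. Your direct induction is a legitimate alternative once the missing cases and their guards are added. The interface route additionally buys reuse on the sibling wire grammar.
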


The safety proof requires the following restrictions in addition to the
machine's usual state and history checks.
\begin{itemize}
\item Ordinary delivery sends only committed events that are absent from
  the acceptance record. It pauses while any prepared recovery batch
  remains authorized by the fence.
\item A claim prepares a duplicate-free batch from the sink's acceptance
  record. The alternative cursor-recovery sequence must use a
  duplicate-free suffix containing only events absent from that record.
  It cannot start while another prepared batch remains authorized, and
  its prepare, repair, and send steps must run together without
  intervening actions.
\item Snapshot effects must match the state reconstructed from the journal
  prefix. These effects are used by the DBLog instance described in
  Section~\ref{sec:mech}.
\end{itemize}
The permitted executions exclude truncating crashes, independent fence
raises, and batch preparation outside a claim or the cursor-recovery
sequence above. Retention, lost publishes, release, and sending an already
prepared batch remain permitted.

Under these rules, the fence never exceeds the newest spawned generation.
A worker can still attempt to send after a later claim removes its
authority. The sink therefore checks the worker's generation when the
batch is sent.

\begin{theorem}[Completed-Claim Exactness]\label{thm:claim}
On the concurrency machine, under
\begin{itemize}
\item[C1.] \emph{Initialized discipline:} the trace starts at the
  initial state and follows the claim discipline.
\item[C2.] \emph{Claimed:} generation $g$ claims at frontier $f$ and
  later fires, with no intervening claim or fire of $g$.
\item[C3.] \emph{Window-fresh:} no source commit at or below $f$ lands
  between the claim and the fire.
\item[C4.] \emph{Completed at the fence:} $\varphi \le g$ when the
  fire lands, so its batch is accepted.
\end{itemize}
the post-fire state is exactly-once at $f$, judged against the
retained obligation history $H_{\mathit{ret}}$. Per-instance exactness
additionally requires ascending coordinates. Without them the count is
set-level, by payload.
\end{theorem}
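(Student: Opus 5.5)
The plan is to show three things about the post-fire state $w'$: the accepted record $A(w')$ has no premature entry, it has no duplicate entry, and it covers $\replay(H_{\mathit{ret}},\scope,f)$. Hazard-freedom comes directly from Lemma~\ref{lem:claim}. By C1 the whole trace is claim-disciplined from the initial state, so $\neg\,\unsafe_A(w')$ holds. What remains is the at-least-once half, $\alo_A(f,w')$, judged against $H_{\mathit{ret}}=\mathrm{drop}(\mathit{fl},H)$.

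For completeness I would look at the claim step itself. By C2, generation $g$ claims at frontier $f$. At that step the armed batch $B$ equals the sink delta $\sdelta\langle f\rangle$ computed from the accepted record $A_0$ at claim time over the retained history. So every retained obligation at $f$ is either already in $\pay(A_0)$ or in $B$. I then split into two invariants over the segment between the claim and the fire.

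First, $A$ only grows. Appends are the only way $A$ changes on this machine: no step removes an accepted entry, and the excluded truncating crashes are exactly what would violate this. Hence $\pay(A_0)\subseteq\pay(A)$ at the fire.

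Second, the retained obligations at $f$ at fire time are contained in those at claim time. C3 forbids commits at or below $f$ in the window. Retention can only raise $\mathit{fl}$, and raising it shrinks $\mathrm{drop}(\mathit{fl},H)$.

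C4 then says the fire is accepted, so $B$ is appended to $A$ in full. Combining the two invariants with the delta's defining property gives $\replay(H_{\mathit{ret}},\scope,f)\subseteq\pay(A(w'))$.

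The main obstacle is the second invariant. It needs monotonicity of replay under the floor and history updates that can occur in the window, and for this I will prove a frame lemma over each permitted step. Ordinary commits append above $f$ or not at all under C3, retention raises the floor, and all other steps leave $H$ and $\mathit{fl}$ fixed. The delicate point is that a commit strictly above $f$ must not change $\replay$ at $f$. This is immediate from the filter $c\le f$. A retention step that drops a prefix must only remove entries, and this is a $\mathrm{drop}$/filter commutation fact. I would also confirm that no other fire of $g$ or new claim intervenes (C2), so $B$ is still the armed batch when it fires.

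For the per-instance strengthening, I would add ascending coordinates and reuse the P4 argument of Theorem~\ref{thm:escape}. Strictly increasing coordinates make replay entries pairwise distinct. Together with $\neg\,\dupl_A$ from Lemma~\ref{lem:claim}, each obligation then occurs exactly once in $A(w')$, so the set-level count lifts to a per-instance count. Without ascending coordinates the argument stops at set inclusion by payload, which matches the statement's caveat.
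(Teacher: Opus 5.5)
Your proposal is correct and follows the route the paper indicates. Hazard-freedom comes from Lemma~\ref{lem:claim} under C1. Completeness comes from the claim-time sink delta, which the append-only accepted record, the C3 window freshness, and the monotone floor carry to the fire, where C4 makes it land whole; ascending coordinates then lift the payload count to a per-instance one. One small correction: a truncating crash cuts the live history $H$ and leaves both ledgers intact (no step of any machine removes a ledger entry), so its exclusion matters for your frame lemma on $H$ and $\mathit{fl}$, not for the growth of $A$.
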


Window freshness is required. A commit at or below
$f$ landing inside the claim window grows the set of required deliveries after the batch
was fixed. One modeled idealization must be stated here: the fire
transition executes its armed batch atomically. A recoverer crashing after a strict prefix of a
multi-item batch is outside the modeled transition. Per-entry firing is
future work, and Section~\ref{sec:mech} carries the absence.

\paragraph{Retention-relative exactness.}
The theorem's conclusion is graded against what retention kept, and
the difference matters. The machine's retention step has no guard, so
a disciplined run can raise the floor over
an obligation that was committed and never delivered, dropping it from
the graded history without disturbing the theorem's verdict. The
mechanization includes a concrete control trace that is exactly-once
relative to retained history while, judged against the journal,
a committed obligation is lost. That run's accepted record is empty. The
stronger, journal-graded statement is then closed under either of two
conditions, and the boundary between them is itself a theorem. On the
\emph{permanent-source} fragment (journal equal to the committed
history, floor zero), the two grades provably coincide, and
Theorem~\ref{thm:claim} yields journal-grade exactness outright. Off
that fragment, the bridge is \emph{floor-prefix coverage}: everything
retention dropped below the floor was already accepted.

\begin{corollary}[Journal-grade exactness]\label{cor:journal}
With the journal equal to the committed source history,
\[
  \eo^{J}(f,w) \;\longleftrightarrow\;
  \eo^{\mathit{ret}}(f,w) \,\wedge\, \mathrm{covered}_{\mathit{fl}}(f,w),
\]
where $\eo^{J}$ and $\eo^{\mathit{ret}}$ grade exactness against the
journal and the retained history respectively, and
$\mathrm{covered}_{\mathit{fl}}(f,w)$ means that every scoped source
event at or below $f$ that retention has excluded from replay already
appears in the acceptance record. In
particular, a disciplined run whose retention steps only ever drop
already-accepted obligations delivers journal-grade exactness through
Theorem~\ref{thm:claim}.
\end{corollary}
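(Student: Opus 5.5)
The equivalence is a statement about lists and subsets, so I would prove it by unfolding definitions rather than by any trace argument. Under the hypothesis $J = H$, the justification authority equals the live source history. So the safety conjunct $\neg\,\unsafe_A(w)$ is literally the same predicate in both grades: prematurity is judged against $J$, and duplication is a property of $A$ alone. That leaves only the completeness conjuncts to compare, namely
\[
  \alo^{J}(f,w) \;\longleftrightarrow\; \replay(J,\scope,f) \subseteq \pay(A(w))
\]
and
\[
  \alo^{\mathit{ret}}(f,w) \;\longleftrightarrow\; \replay(\mathrm{drop}(\mathit{fl},H),\scope,f) \subseteq \pay(A(w)).
\]
The corollary reduces to showing that the first inclusion is equivalent to the conjunction of the second with $\mathrm{covered}_{\mathit{fl}}(f,w)$.

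The key lemma is a split of the replay filter across $\mathrm{take}$ and $\mathrm{drop}$. Since $\replay$ is a list comprehension, and filtering distributes over append, $H = \mathrm{take}(\mathit{fl},H) \mathbin{@} \mathrm{drop}(\mathit{fl},H)$ gives
\[
  \replay(H,\scope,f) = \replay(\mathrm{take}(\mathit{fl},H),\scope,f) \mathbin{@} \replay(\mathrm{drop}(\mathit{fl},H),\scope,f).
\]
Taking element sets, the set of $\replay(H,\scope,f)$ is the union of the two sets, and inclusion of a union in $\pay(A(w))$ is the conjunction of the two inclusions. I would then define, or check that the formal definition already is, $\mathrm{covered}_{\mathit{fl}}(f,w)$ as $\replay(\mathrm{take}(\mathit{fl},H),\scope,f) \subseteq \pay(A(w))$, that is, scoped events at or below $f$ among those retention excluded. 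Rewriting $J$ to $H$ then closes both directions at once.

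The main obstacle is matching $\mathrm{covered}_{\mathit{fl}}$ to that prefix form. If the formal definition speaks of events excluded from replay, stated as membership in $\replay(H)$ but not in $\replay(H_{\mathit{ret}})$, then an event occurring in both the dropped prefix and the retained suffix (an equal-payload alias) is not excluded. In that case I would argue that the alias is already covered by the retained conjunct, so the equivalence still holds at set level. This is consistent with the paper's remark that, without ascending coordinates, counts are set-level by payload. A second point to check is whether the $\mathit{fl} > |H|$ edge case is harmless; it is, because $\mathrm{take}$ and $\mathrm{drop}$ saturate.

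The \emph{in particular} clause follows by composition. Theorem~\ref{thm:claim} supplies $\eo^{\mathit{ret}}$ at the post-fire state. For coverage, I would prove an invariant along the trace: if every retention step raises the floor only over obligations already in $A$, then coverage holds, because $A$ is append-only and only grows, and no step shrinks $H$. The equivalence above then yields $\eo^{J}$.
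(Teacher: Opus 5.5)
Your proof of the biconditional is correct and follows essentially the intended route. With $J=H$ the hazard conjunct is literally shared. Splitting the replay filter over $\mathrm{take}(\mathit{fl},H)\mathbin{@}\mathrm{drop}(\mathit{fl},H)$ then turns journal-grade completeness into retained-grade completeness plus floor-prefix coverage. The case $\mathit{fl}=0$ is the permanent-source fragment, where coverage is vacuous.

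One caution for the \emph{in particular} clause. The saturation of $\mathrm{take}$ and $\mathrm{drop}$ that makes $\mathit{fl}>|H|$ harmless for the equivalence is exactly what your invariant overlooks. If retention can raise the floor past the end of $H$ (the paper describes the retention step as unguarded), later commits fall into $\mathrm{take}(\mathit{fl},H)$ without any retention step having checked them. The claim's delta over $H_{\mathit{ret}}$ then never re-drives them. So ``$A$ only grows and $H$ never shrinks'' does not by itself preserve coverage. Either carry $\mathit{fl}\le|H|$ as an invariant, or assume $\mathrm{covered}_{\mathit{fl}}$ at the post-fire state and compose it with Theorem~\ref{thm:claim} directly.
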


The equivalence is the precise form of a folklore sentence (``replay
what retention kept, and make sure retention only expires what was
delivered''), and the guard-free counterexample above is why the second
clause cannot be dropped.

\paragraph{Two invalid orderings.}
The discipline depends on this order. Two concrete counterexample
traces refute the tempting variants
(Figure~\ref{fig:fences}, lower row). \emph{Fencing at release}
(doing the work first and setting the fence at hand-off) admits a batch from an
earlier worker before the fence is raised. \emph{Claiming without
reading} (setting the fence over a batch computed before the
claim) fences out the claimant's own re-drive. Claim first, read under
the claim, fire under it.

\paragraph{Interface between the two fences.}
This paper now has two fences, and no theorem merges them. The arrival
fence guards a \emph{wire}: one recoverer, arrivals from its own past.
The claim fence guards a \emph{fire point}: several recoverers, including workers
that can still send after losing recovery authority. They live on different machines. What is
proved between the regimes is one interface with two obligations:
\begin{itemize}
\item[(T)] \emph{Transport delivers a sub-multiset:} what arrives is a
  sub-multiset of what was sent (reorder, delay, and drop are
  allowed, but fabrication, duplication, and alteration are not).
\item[(J)] \emph{Justification is monotone:} a record that justifies
  an emission still justifies it after any extension.
\end{itemize}
The claim discipline's safety argument is proved once against these
obligations alone, then instantiated twice: by a genuine pullback from
this section's fire point, and by a sibling abstract wire grammar that
permits loss, reordering, and arrivals from earlier generations. That
sibling is \emph{not} Section~\ref{sec:wire}'s channel machine and does
not re-derive its theorems. The asymmetry is the interface's shape:
safety crosses it, exactness does not, and nothing else crosses at
all. Note also what obligation (T) excludes: transport that
\emph{duplicates} a staged item. A generation fence rejects requests from
generations below its current threshold. It does not deduplicate a same-generation copy the
transport minted on its own. Deployments running over at-least-once
transport need stable identity and an admission or deduplication
contract in addition to anything this section proves, and
Section~\ref{sec:memory} is about how long such contracts hold.

\section{Evidence Lifetime}\label{sec:memory}

Systems often combine at-least-once delivery with
deduplication~\cite{treat,ddia}. Theorem~\ref{thm:dedup} gives a
coverage identity for an ideal sink view with permanent memory. The
remaining results examine bounded memory and loss of source history.

\paragraph{Permanent deduplication.}
Model the idempotent consumer as the \emph{absorbing, permanent-memory
deduplicated view} of the sink's record, in which repeated payloads are
absorbed,
$\dedupv(t) = \mathrm{remdups}(\pay(E(t)))$, remembering every payload
it has ever seen. For a list $L$, $\#_p L$ is the multiplicity of
payload $p$.

\begin{theorem}[Deduplicated-View Coverage Identity]\label{thm:dedup}
On the effect machine with the absorbing, permanent-memory view: every
payload in the set of scoped replay obligations at $f$ is represented
exactly once in the deduplicated view if and only if delivery is
at-least-once at $f$:
\[
  \bigl(\forall p \in \replay(H,\scope,f).\;
        \#_p\,\dedupv(t) = 1\bigr)
  \;\longleftrightarrow\; \alo(f,t).
\]
\end{theorem}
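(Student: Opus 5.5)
The identity reduces to one list fact about $\mathrm{remdups}$: for any list $L$ and element $p$,
\[
  \#_p\,\mathrm{remdups}(L) \;=\; \begin{cases} 1 & p \in L,\\ 0 & p \notin L.\end{cases}
\]
I would prove this first as an auxiliary lemma. The argument is by induction on $L$, or directly from $\mathrm{distinct}(\mathrm{remdups}\,L)$ together with $\mathrm{set}(\mathrm{remdups}\,L) = \mathrm{set}(L)$. A distinct list has every multiplicity at most one, and the count is positive exactly when the element is a member. In Isabelle this should follow from the library facts \texttt{distinct\_remdups}, \texttt{set\_remdups}, and the characterization of \texttt{count\_list} on distinct lists, possibly via a small lemma that $\mathrm{distinct}\,L \Longrightarrow \#_p L = (\text{if } p\in L \text{ then } 1 \text{ else } 0)$.

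With this lemma, instantiate $L = \pay(E(t))$. The left-hand side of the theorem becomes $\forall p \in \replay(H,\scope,f).\; p \in \pay(E(t))$, since for each $p$ the equation $\#_p\,\dedupv(t) = 1$ holds iff $p \in \pay(E(t))$. That statement is exactly the unfolding of $\alo(f,t)$, namely $\replay(H,\scope,f) \subseteq \pay(E(t))$ read as set inclusion of the list elements. Both directions of the biconditional then follow pointwise, with no case split on the machine state. In particular, no reachability, P1--P4, or hazard-freedom is needed, because the identity is purely extensional in the ledger. I would state it for arbitrary $t$, which also covers the effect machine's reachable states.

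The only step needing care is the auxiliary counting lemma. Specifically, the $\Rightarrow$ direction uses that $\#_p = 1$ implies membership; this is immediate since a positive count means $p$ occurs. The $\Leftarrow$ direction uses that distinctness bounds the count by one. I expect the main friction to be notational rather than mathematical: making sure $\alo$ is unfolded as list-set inclusion, i.e.\ $\mathrm{set}(\replay\ldots) \subseteq \mathrm{set}(\pay(E))$. I also need to confirm that $\#_p$ is the same list-count function used elsewhere, so the simplifier can rewrite with the lemma.
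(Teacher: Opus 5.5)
Your proposal is correct and matches the argument the paper relies on. The paper notes that $\dedupv(t)=\mathrm{remdups}(\pay(E(t)))$ is duplicate-free by construction, and it keeps the element set of $\pay(E(t))$. So $\#_p\,\dedupv(t)=1$ reduces pointwise to $p\in\pay(E(t))$, and the left side becomes exactly $\alo(f,t)$. Your remark that the identity is purely extensional and needs no reachability or P1--P4 also fits the paper, which requires strictly ascending coordinates only for the separate per-instance corollary.
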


The view is duplicate-free by construction. The theorem establishes
whether every required payload is present. It does not rule out
premature effects on the underlying record or verify a consumer's
implementation of checking for a prior effect, applying the effect,
and recording its completion.

The per-instance corollary additionally requires strictly increasing
source coordinates. An equal-coordinate example refutes that
stronger conclusion without the condition. The following results
examine the assumption of permanent memory.

\paragraph{Bounded deduplication.}
Real deduplication memory is a window, and the window here is an
\emph{entry-count} horizon, measured in record positions, not
seconds. A duration maps to it only through a delivery-rate argument.
A legal-run counterexample shows how the guarantee fails. A windowed view satisfies the same
equivalence at a reachable state, and then one more crash and one more
honest re-drive push the original delivery out of the window, so the
sink processes the re-driven copy as new. The windowed count reads
two while permanent memory still reads one. For the constructed run
the threshold is sharp between window sizes one and two, and the
window-sufficiency condition that protects a run is proved sufficient
and \emph{not} necessary. The operational shape of the hazard is
familiar: Stripe's idempotency keys become eligible for removal after
at least 24 hours~\cite{stripe}, and Kafka's producer-ID expiration setting defaults to one day,
although ongoing transactions and topic retention affect when state
expires~\cite{kafka-exp}. An
outage that outlives the deduplication horizon reproduces the
incident in its quiet form: recovery is honest, the keys are gone, and
the re-drive lands as new work.

\paragraph{Deduplicating by key.}
Deployed systems often match events by keys such as an LSN, offset,
or idempotency key~\cite{stripe}. The keyed policy compares source
coordinates instead of the full payload $(c,e)$. It computes the
same missing batch when source coordinates increase strictly and
every accepted entry matches a committed source record.

Each condition has a separate counterexample. Coordinate matching can remove entries from the batch selected by
full payload matching, but cannot add entries. The counterexamples
show how this can suppress a required replay and leave committed work
undelivered. Comparing only application contents is
also insufficient. Distinct committed operations can have identical
contents, and their source coordinates distinguish them. Derive the
key once, at commit time, from that stable coordinate.

\paragraph{Compaction that keeps presence.}
On the sink's own record the safe boundary is \emph{presence}.
Rewrites that preserve a record's presence (compaction, tombstoning
that keeps the key's entry) do not affect the windowed policy,
which needs no more than window-local presence. Presence-destroying
retention fabricates duplicates at a reachable state. The windowed
policy never loses work, so its only failure is duplication. This
bounds what the policy analyzed here needs to read. That
no policy could do better with bounded memory is not claimed
(Section~\ref{sec:mech}).

\paragraph{Compensation instead of deduplication.}
Sinks with reversing entries permit another approach: compensate,
then redo. The theory analyzes three selected compensation
observations (matched cancellation, signed net, and truncation), and
the verdicts split. Under the matched-cancellation view, order-aware
compensate-then-redo normalizes both outcomes of the designed pair.
The same batch succeeds on both members because the judged observation
has changed. Under the signed-net observation, the construction again forces
recovery to choose the same batch for states that require different
actions. Truncated views do not distinguish the pair either. Throughout, the underlying record
stays append-only. Reversal operations never delete sent events, but instead append matching
reversing emissions in sequence. The matched-cancellation result is proved for
the abstract re-drive
relation. It is not an execution of this machine's suffix-only reconcile. Realizing it
requires a compensation-capable transition, or a refinement proof for
an implementation that has one.

\paragraph{Source truncation and journal evidence.}
The second shrinkage is upstream, and it is the harsher one. On the
concurrency machine, retention raises the floor over a history that
still exists. A \emph{truncating crash} instead cuts the live source
history itself, while the journal $J$ remains the specification of
which deliveries were required. What recovery can read afterwards is the
\emph{retained view}
\[
  \view(w) = \langle\, \sigma,\ |J|,\ \varphi,\ \Gamma,\
  \mathit{hwm},\ \mathit{fl},\ S,\ A \,\rangle ,
\]
everything a policy could ask for, except that the journal enters
only as its \emph{length}. Now take two runs that each commit one
event, differing only in which, and let a truncating crash cut both
live histories back to empty. The surviving stores are literally
equal, and so is every other slot of the view: same fence, same
generation table, same high-water mark, same floor, same two ledgers,
and journals of the same length. The journal contents differ, hidden
from the policy, and with them the sets of required deliveries. One
state still has an undelivered event at $f$. The other requires no
delivery at that frontier.

\begin{theorem}[Truncation Dilemma]\label{thm:trunc}
On the concurrency machine there are reachable states $W_1, W_2$ with
equal retained views over differing journals. State $W_1$ has an undelivered event at $f$
according to its journal, while $W_2$ requires no delivery there. Every
policy computed from the retained view is defeated: there are
re-drives of both at $f$ under which $W_2$'s outcome is a hazard (a
fabricated delivery where none was required) or $W_1$ remains incomplete at $f$
(abandonment).
\end{theorem}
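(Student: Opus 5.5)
The plan follows the template of Theorems~\ref{thm:obsbound} and~\ref{thm:checkpoint}: construct the pair explicitly, show the two retained views are equal, and then split on the common batch.

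\emph{Construction.} Fix one key $k\in\scope$ and two distinct events $e_1\neq e_2$ on $k$ at the same coordinate $c\le f\le\fin$. Run~1 commits $(c,e_1)$ and run~2 commits $(c,e_2)$ from the initial state. Both runs append the commit to $J$. Neither run fires, so $S=A=[\,]$ and there is no claim, $\varphi=0$, and $\Gamma$, $\mathit{hwm}$, and $\mathit{fl}=0$ are untouched. Each run then takes the truncating crash, which cuts the live history $H$ back to $[\,]$ and sets status $\Crashed$. I check reachability by exhibiting the two traces step by step, and I verify each rule's guard along the way. I expect that to be routine, because commit and truncating crash place only mild guards in the unrestricted (not claim-disciplined) machine.

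\emph{Equal views.} Next I show $\view(W_1)=\view(W_2)$ componentwise. The core $\sigma$ agrees because every field that depended on the commit has been reset by the truncating crash. I expect this to be the main obstacle. The downstream history $D$, the queue $Q$, $\mathit{pend}$, and $\mathit{Ack}$ must not retain the event. The construction therefore avoids any enqueue, deliver, or ack step, so that these fields stay at their initial values. If the truncating crash also rewinds $D$ or the pending set, even better. If some field such as the pending set does retain the event, I fall back to choosing $e_1,e_2$ that differ only outside the observed fields, or I drop the enqueue entirely. The remaining components also agree: $|J|=1$ in both runs, and $\varphi$, $\Gamma$, $\mathit{hwm}$, $\mathit{fl}$, $S$, and $A$ are untouched and therefore equal. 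The journals themselves differ, $[(c,e_1)]\neq[(c,e_2)]$, and the theorem requires only that.

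\emph{Required deliveries.} The obligations are graded against the journal. With $W_1$ designated as the state whose journal requires $(c,e_1)$ at $f$, I pick the second run's journal event to lie outside the scope: $e_2$ on a key $k'\notin\scope$, so that the journal length is still $1$. Then the journal-graded replay of $W_2$ at $f$ is empty, while that of $W_1$ is $[(c,e_1)]$. This is consistent with the statement that $W_1$ has an undelivered event at $f$ and $W_2$ requires none.

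\emph{Case split on the batch.} Let $P=g\circ\view$ be the policy and $B=g(\view(W_1))=g(\view(W_2))$ the common batch. I show that re-drives exist from both states, since the re-drive rule is guarded only by the crashed status and $f\le\fin$. The argument then splits on whether $B$ is empty. If $B=[\,]$, then $W_1$'s outcome has $\pay(A)$ without $(c,e_1)$, so it is incomplete at $f$ and the recovery abandons the required delivery. If $B\neq[\,]$, then $W_2$'s outcome accepts some payload $x$ into $A$. Justification is measured against $J_2=[(c,e_2)]$. If $x\neq(c,e_2)$, then $x$ is unjustified, which makes the outcome premature. If $x=(c,e_2)$, that payload is justified but not required, so I need the hazard to count as a fabricated delivery. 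To avoid this subcase, I make $J_2$'s event uncommitted-looking: justification checks $\pay(x)\in\mathrm{take}(\stampf(x),J)$ with the stamp taken at the current live length $|H|=0$, so $\mathrm{take}(0,J_2)=[\,]$ and every emission in the batch is premature. This stamping argument closes the case uniformly and gives $\unsafe_A$ on $W_2$'s outcome.
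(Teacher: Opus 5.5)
Your construction matches the paper's: two runs that each commit one event, differing only in which event, followed by a truncating crash that empties both live histories. After it the stores, fence, generation table, high-water mark, floor, both ledgers, and $|J|$ agree, while the journal contents differ. Fix the construction once at the start, though. As first stated, both events sit on the scoped key $k$ at $c\le f$, so $W_2$ would also require a delivery. Only the later version, with $e_2$ on a key outside $\scope$, matches the statement.

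The gap is in the final case split. Splitting on $B=[\,]$ versus $B\neq[\,]$ leaves a subcase where $B$ is non-empty but omits $(c,e_1)$, for instance $B=[(c,e_2)]$. There $W_2$'s outcome need not be unsafe, because $(c,e_2)$ lies in $J_2$ and can be justified. You close this subcase by asserting that re-drives are stamped with the live length $|H|=0$, so everything fired after truncation is premature. Nothing in the machine's description supports that assumption. It also sits badly with the paper's stated reason for grading justification against the journal: losing the live record should not make deliveries retroactively premature. If that stamping rule held, any delivery of $(c,e_1)$ from $W_1$ would be premature as well, and the fabricate-versus-abandon pair would no longer carry the argument.

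You do not need any stamping hypothesis. Split instead on whether the required payload $(c,e_1)$ occurs in the common batch $B$.

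\emph{If it does:} $W_2$'s re-drive is accepted because the fence is $0$, so it places an entry with payload $(c,e_1)$ into $A$. Since $(c,e_1)$ does not occur in $J_2=[(c,e_2)]$ at all, it lies in no prefix $\mathrm{take}(n,J_2)$. That entry is therefore unjustified whatever its stamp. This is the fabricated delivery, and $\unsafe_A$ holds.

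\emph{If it does not:} $W_1$'s accepted record after the re-drive is the empty record extended by $B$, which lacks $(c,e_1)$. So $W_1$ is incomplete at $f$ when graded by its journal.

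The troublesome subcase $B=[(c,e_2)]$ falls into the second branch.
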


Whatever rule the policy runs, one state makes it fabricate work or
the other makes it abandon work. Two
follow-ups make the theorem usable.
Reading the \emph{live} source instead of the journal does not help
and actively harms: the machine admits phantoms (deliveries whose
justifying record is gone), and judging against the live history makes
them retroactively premature, which is the machine-checked reason
justification measures against the journal in the first place. And
the practitioner's version of the boundary is about \emph{content}
versus \emph{extent}: retention that hides old content while an
authoritative journal can still reconstruct it is recoverable. A
crash that leaves recovery with only length-like metadata is what
the theorem describes. Keep the recovery window's journal content
reconstructible, not just its size.

\section{Store-Tier Characterization of Log-Derived State}\label{sec:store}

CDC and the transactional outbox are the standard remedy for dual
writes, and they are widely deployed: Debezium alone ships log
connectors for most major databases~\cite{debezium}. Everything so far
has treated ``derive downstream state from the committed log'' as
sound advice whose delivery stage still needs a theory. This section
proves the advice's own half exactly, and locates the boundary between
the two halves.

\paragraph{Store-tier characterization.}
At the store tier, the model is Section~\ref{sec:model}'s substrate
with implementations over it. An implementation supplies an initial
state, a step relation, and a projection to core states. Three
properties name what the theorem asks of it:
\begin{itemize}
\item[R1.] \emph{Crash-closed:} every reachable running state can crash at every
  in-interval frontier.
\item[R2.] \emph{Refining:} its steps project to the substrate's.
\item[R3.] \emph{Running-start:} its initial state is \Running.
\end{itemize}
For a materialized map $h$, $h\rst\scope$ is restriction to scoped
keys. An \emph{observable mismatch at $(c,k)$} means the state is
$\Crashed\,c$ with $c\leq\fin$ and the committed and downstream
materializations disagree on scoped key $k$ at $c$.

\begin{theorem}[Faithful-Image Equivalence]\label{thm:image}
For a crash-closed, refining, running-start implementation $I$ of the
store model: $I$ has no reachable observable scoped mismatch at a
crash frontier if and only if $I$ is running-image-faithful at every
reachable \Running{} state and every in-interval frontier $f$,
\[
  \Src(b,D,f)\rst\scope \;=\; \Src(b,H,f)\rst\scope .
\]
\end{theorem}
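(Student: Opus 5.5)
We prove the two directions separately, using the three hypotheses R1--R3 to move between \Running{} states and crashed states. The bridge is a frozen-image lemma about the substrate's crash step, which comes first. When a \Running{} state crashes at an in-interval frontier $c$, the step only changes the status to $\Crashed\,c$. The base $b$, the histories $H$ and $D$, the scope $\scope$ and $\fin$ are all left unchanged. So the scoped committed and downstream materializations at $c$ are the same before and after the crash. This follows from the definition of the substrate's crash rule, and R2 transfers it to any projected implementation step that lands in a crashed state.

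\emph{Direction ($\Leftarrow$): faithfulness excludes mismatches.} Suppose some reachable state of $I$ projects to a core with status $\Crashed\,c$, where $c\le\fin$. We need the scoped images to agree at $c$. We induct over the reachable trace; the induction runs on the trace because R2 relates each implementation step to a substrate step. R3 guarantees that the trace starts \Running{}, so the status must have changed to $\Crashed\,c$ at some step. We argue that every step that keeps or enters a $\Crashed\,c$ status preserves equality of the scoped images at $c$, in three cases.
\begin{itemize}
\item Crash step: the frozen-image lemma applies, together with faithfulness at the preceding \Running{} state at frontier $c$.
\item Steps taken while crashed: the substrate only allows recover steps, and perhaps commits or deliveries that stay above the frozen frontier. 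These either leave \Crashed{} or do not change materialization at or below $c$.
\item Recover-and-resume steps: these lead back to \Running{}.
\end{itemize}
The invariant to carry is therefore ``every reachable $\Crashed\,c$ state has agreeing scoped images at $c$.'' The one substrate fact we must check is that no step available in \Crashed{} touches $H$ or $D$ at coordinates $\le c$. If the substrate does let such steps happen while crashed, the invariant has to be stated relative to the crash frontier, and that is where the real work is.

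\emph{Direction ($\Rightarrow$): absence of mismatches forces faithfulness.} We argue by contraposition. Take a reachable \Running{} state $s$ and an in-interval frontier $f$ with $\Src(b,D,f)(k)\ne\Src(b,H,f)(k)$ for some $k\in\scope$. By R1, $s$ can crash at $f$, so the crashed successor is reachable. By R2 its projection is a substrate crash step, so by the frozen-image lemma the same disagreement on $k$ holds at $f$. The crashed state has status $\Crashed\,f$ with $f\le\fin$, so it is a reachable observable scoped mismatch at $(f,k)$.

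I expect the main obstacle to be the $(\Leftarrow)$ induction, specifically pinning down the crashed-status invariant so that it is stable under every substrate rule. The $(\Rightarrow)$ direction needs only R1 and the frozen-image lemma. To handle the crashed-phase steps, I would first try to prove a substrate lemma: every step from a $\Crashed\,c$ state either preserves $H$ and $D$ exactly or produces a non-crashed status. Then restriction of materializations to $\scope$ at $c$ is unaffected in every case.
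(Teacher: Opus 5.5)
Your proposal is correct and follows the paper's argument. The $(\Leftarrow)$ direction traces a reachable $\Crashed\,c$ state back to a \Running{} predecessor, which exists by R3, and uses that a crash changes no history, so faithfulness at that predecessor rules out the mismatch; the $(\Rightarrow)$ direction uses R1 to crash a mismatching \Running{} state at the offending in-interval frontier and so exposes the mismatch. The substrate fact you flag, that no step keeps the run crashed while altering $H$ or $D$, is what the paper relies on when it says a crash ``freezes'' the run, so your hedge about commits landing above the frontier is not needed.
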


Unlike the recovery bounds, which are constructions, this equivalence
quantifies over \emph{all} implementations of its tier: the one
result in the paper that is universal over implementations, as promised in
Section~\ref{sec:model}. It characterizes the equality that ``derive
from the log'' must establish: under R1-R3, keeping the running
downstream image faithful to the committed log \emph{is} store-tier
crash safety, in both directions. The two directions are useful in
different places. Faithfulness rules out crash-frontier mismatches
by a structural step: a crash changes no history, so a mismatch frozen
at a crash was already present at a running predecessor, where
faithfulness forbids it. That step needs R3, and R3 is necessary. An
implementation that begins already crashed has no running states at
all, is faithful vacuously, and is unsafe anyway. That class provides
the counterexample when R3 is omitted. The converse is
where R1 works: if some reachable running state held a scoped mismatch
at an in-interval frontier, crash-closure supplies a crash \emph{at
that frontier}, making the mismatch observable. If R1 is weakened
to crashes at only some frontiers, the argument has no crash to take.

\paragraph{Relay example.}
The two-event instance shows the biconditional at work. A downstream
derived from the log in committed order has image $7$ at $c_1$ and $9$
at $c_2$, matching the source at both frontiers. No crash of that run
freezes a mismatch. Now let the relay acknowledge the update at $c_2$
but defer the write. Its delivered history still stops at the insert,
so at frontier $c_2$ the downstream image reads $7$ against the
committed $9$: a scoped mismatch at a running frontier, and by R1 a
crash is available right there to expose it. The practitioner's
version is short: at this tier the crash adds no window of its
own (what a crash can expose is exactly the gap between the
committed log and the derived image at a running frontier), so derive
from the log, in the log's order, and close that gap.

A lagging relay also shows what the theorem does \emph{not} provide.
Between the acknowledgement and the deferred write, the relay is not
running-image-faithful, and no theorem here says a lagging pipeline
``catches up.'' Liveness is outside this model. The constructive
result is frontier-scoped. At every
\emph{occurrence-complete} frontier (one whose scoped obligations
have all been delivered), the image is mismatch-free. CDC and outbox
designs should therefore be read as machinery for establishing
complete-frontier image faithfulness at the frontiers they advertise,
not as global faithfulness at every instant. An advertised frontier is
a claim. The theorem says what that claim must mean.

\paragraph{Limit of the store tier.}
The designed pair of Theorem~\ref{thm:obsbound} agrees on the whole
store, so store-tier faithfulness cannot arbitrate a single
re-delivery decision. The acceptance boundary begins where this
biconditional stops. The tier scales to multiple derived targets
without new theory. The proof shows that fan-out diverges
exactly when some single target does, and a target in the formal
log-derived class (which requires image equality at every modeled
frontier) is excluded from the divergence class. That is a statement
about the formal class, not a blessing of every architecturally
log-backed but lagging deployment. The per-target reading is the right
one, and no cross-target atomicity is implied.

\paragraph{Derivation relocates the delivery boundary.}
A log-derived pipeline still has to deliver. The relay that tails $H$
and writes the sink performs, per event, two durable effects (the
delivery and its own progress record) with no shared commit between
them: Section~\ref{sec:model}'s effect machine under an operational
name. This relocation claim is an argument assembled from proved
parts (Theorem~\ref{thm:image} below the boundary,
Theorems~\ref{thm:obsbound}-\ref{thm:second} above it), and we state
it at that level rather than as a theorem of its own. In the
incident's terms: the relay tailing committed orders was already this
stage. Derivation had removed the handler's dual write and left the
delivery loop's own, which is where the crash found it. The replayable log lets
the relay retry failed deliveries and catch up after an outage. The sink
must still provide the acceptance evidence and controls recovery needs. Nor is
``derive from the log'' a single
design point: the store tier's replay interface has two genuinely
distinct certified inhabitants, a refresh-free outbox segment and the
snapshot-replay instance of Section~\ref{sec:mech}, and mechanized
witnesses establish their distinctness.

\paragraph{Two roles for an outbox.}
Table~\ref{tab:subst}'s role split matters here. An outbox
table read by the relay as its source \emph{is} the committed log $H$.
It is not thereby an acceptance record. The same physical table
realizes $A$ only if it holds sink-owned acceptance state updated
atomically with acceptance. Otherwise $A$ is a separate durable
record on the sink's side. Sentences about ``the outbox pattern''
that do not name the role equivocate between the two sides of the
acceptance boundary: $H$ is what Theorem~\ref{thm:image} governs,
while $A$ is what Theorems~\ref{thm:escape} and~\ref{thm:fence}
consume.

\section{The Mechanization}\label{sec:mech}

This section explains how the recovery models relate and what the
Isabelle proofs establish.

\paragraph{Relationship between the machines.}
Write $\iota$ for the embedding of effect-machine states into the
concurrency machine and $\Pi$ for the projection back.

\begin{theorem}[Shared-Core Agreement]\label{thm:agree}
(i) $\iota$ embeds every effect-machine trace into the concurrency
machine's solo fragment. Projection $\Pi$ maps every solo-fragment trace
from a solo-invariant state back while preserving the invariant, and
$\Pi \circ \iota$ is the identity. (ii) Under the alignment
conditions that everything sent is accepted, the journal equals the
committed history, and all payloads are log entries, the hazard verdicts $\prem$,
$\dupl$, and $\unsafe$ agree through $\Pi$. (iii) With the ledgers equal
and the retention floor at zero, the sink-delta batch and the loss and
at-least-once verdicts commute with $\Pi$.
\end{theorem}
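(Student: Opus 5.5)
The plan is to define $\iota$ and $\Pi$ concretely and prove the three parts by simulation. Take an effect state $\langle \sigma, E, \gamma\rangle$. We set $\iota$ to place $\sigma$ unchanged in the store slot and $J := H(\sigma)$. Both the sent and the accepted ledger become $S = A := E$, and the fence is $\varphi := 0$. The generation table records $\gamma$ as the single producing generation, with $\mathit{hwm} := \gamma$ and $\mathit{fl} := 0$. $\Pi$ forgets everything except $\sigma$, reads $E := A$, and reads $\gamma := \mathit{hwm}$. The \emph{solo-invariant} is the conjunction of the properties $\iota$ establishes, up to what solo steps may change: one live generation equal to $\mathit{hwm}$, $\varphi \le \mathit{hwm}$, and $S = A$. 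With these definitions, $\Pi \circ \iota = \mathrm{id}$ holds by unfolding, since $\Pi$ reads back exactly the three slots $\iota$ wrote.

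For (i), I will prove a one-step lemma and lift it to traces by induction on trace length. For each effect-machine label (commit, enqueue, deliver, acknowledge, crash, recover, reconcile with index $m$, resume), I exhibit a solo-fragment step from $\iota(t)$ to $\iota(t')$. Commit must also append to $J$ so that $J = H$ is kept. Delivery maps to a single-generation fire with $\varphi \le \gamma$, so the emission lands in both $S$ and $A$. Resume maps to spawning generation $\gamma+1$ and retiring $\gamma$. The backward direction is a case analysis on solo-fragment rules. Using the invariant, each rule projects under $\Pi$ to a legal effect step or a stutter. Each rule also re-establishes the invariant; for example, a fire from the unique live generation is accepted, so $S = A$ persists. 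I expect this case split to be the main obstacle. The difficulty is that the concurrency machine's extra rules (retention, release, lost publish) must either be excluded from the solo fragment or be shown to project to stutters that preserve the invariant. Retention changes $\mathit{fl}$, which is harmless for (i) but is why (iii) assumes $\mathit{fl} = 0$. I would first try to define the solo fragment syntactically, so that each rule is known to be harmless before the case analysis begins.

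For (ii), $\prem$, $\dupl$, and $\unsafe$ on the concurrency side are defined over $A$, with justification against $J$. Under the alignment assumptions $S = A$, $J = H$, and payloads drawn from log entries, $\Pi$ turns these into the effect-side predicates over $E = A$ with justification against $H$. So each verdict agrees pointwise by rewriting, and $\unsafe$ follows as their disjunction. The one subtlety is that the stamp check $\stampf(x) \le |H|$ has to be read against $|J|$. This is where $J = H$ is used.

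For (iii), with $\mathit{fl} = 0$ we have $H_{\mathit{ret}} = \mathrm{drop}(0, H) = H$. Replay is then literally the same list on both sides. With the ledgers equal, the lifted $\sdelta$ filters that list by the same $\pay$ membership, so $\Pi$ commutes with $\sdelta\langle f\rangle$. It also commutes with $\lostp$ and $\alo$, which are defined from replay and $\pay(A)$ alone.
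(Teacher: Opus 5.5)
This is essentially the route the mechanization takes: a syntactically defined solo grammar, an invariant preserved in both simulation directions, and pointwise rewriting of the verdicts under the alignment and $\mathit{fl}=0$ hypotheses. The paper's grammar admits lifted ordinary steps, fused recovery, and generation changes, and excludes free arms, free fence raises, snapshots, lost publishes, truncating crashes, and retention outright, which settles your hedge about retention and lost publishes in favour of exclusion.

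One minor remark. With $\Pi$ reading $E := A$, the hypotheses that everything sent is accepted and that the ledgers are equal do no work in your (ii) and (iii). That suggests the intended $\Pi$ reads the sent ledger $S$ and uses those hypotheses to bridge to the accepted-record verdicts, with the log-entry hypothesis excluding snapshot payloads. Your argument goes through unchanged under that reading.
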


So the small machine of Sections~\ref{sec:wall}-\ref{sec:door}
really is a fragment of the large one of
Sections~\ref{sec:claim}-\ref{sec:memory}: same verdicts, same
deltas, under conditions that align the larger machine with the smaller
one. The fragment excludes several behaviors. The solo
grammar admits ordinary lifted steps, fused recovery, and successive
generation changes, and excludes free arms, free fence raises,
snapshots, lost publishes, truncating crashes, and retention. Of the
shared-core laws, the three-guard fire discipline of
Section~\ref{sec:door} transports through $\Pi$. The rest were
re-proved on the larger machine. Two non-transfers are structural
rather than accidental: the effect machine's absorbing recovery class
does not lift (release and resume break absorption, and the successor
law is open) and the wire is omitted from the concurrency model. The concurrency
machine has none, so Section~\ref{sec:wire}'s results neither follow
from it nor imply anything on it, and the regimes touch only at
Section~\ref{sec:claim}'s interface. Figure~\ref{fig:models} draws
the resulting family.

\begin{figure}[t]
\centering
\includegraphics[width=\columnwidth]{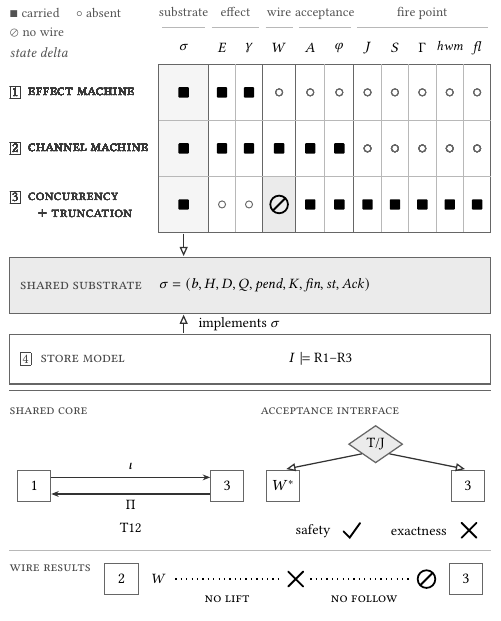}
\Description{A matrix shows the fields carried by three related state
machines over one shared substrate, with a separate store
implementation tier. Machine 3 has a prominent missing-wire cell.
Below, small diagrams show the shared-core projection, the safety-only
interface between the fire pullback and a sibling wire model, and the
blocked channel-wire relation.}
\caption{Fields and relations of the three recovery machines and the
store tier. The checkpointed model extends the effect machine with a
durable cursor and is not shown separately. Filled squares mark
carried fields, and the slashed circle
marks the absent wire. The lower diagrams show the shared-core
embedding, the safety-only interface, and the blocked wire relation.}
\label{fig:models}
\end{figure}

\paragraph{Recovery cycles and the core model.}
\textsf{Resume} changes \Recovered{} to \Running{} and increments
the generation. The histories and emissions ledger stay unchanged.
The core model has no matching transition, so the proof divides
executions of the cyclic machine at \textsf{Resume}.

Each resulting segment corresponds to an execution of the core model
with relay repair added. The store-level theorem applies to sequences
of label steps from the segment's starting state, subject to the stated
checks. It excludes relay repair and further \textsf{Resume} steps.
A checked example preserves a store mismatch across \textsf{Resume}
and repairs it after a later crash.

\paragraph{Prover and trust base.}
Every numbered result in this paper is a machine-checked theorem in
Isabelle/HOL~\cite{isabelle}, checked in a mode where an unfinished
proof fails the build. The trusted base consists of stock Isabelle2025-2
and, beyond the standard prelude, exactly two standard-library
theories (finite multisets and subsequence orderings). There are no
global axioms and no proof oracles. Theorems state their assumptions explicitly. The one
named-assumption context (the acceptance interface) has its two
obligations discharged on both of its instances.
The full development rebuilds from cold in a few minutes on a laptop.

\paragraph{Validation controls.}
The formalization includes four validation controls. Closed witnesses
show that the negative and positive theorem classes are inhabited.
Twin runs test the modeling choices used by the straggler result. A
small-step model confirms that the observation bound does not depend
on atomic recovery. A counterexample found during development required
an additional condition for keyed deduplication. Scratch gate sessions
also checked the main proof terms for unproved dependencies. A planted
failure confirmed that this check detects an unfinished proof. These
scratch sessions are not part of the released development.

\paragraph{Executables.}
Certified executables accompany the theory, chief among them an
exported, theorem-correct store/hazard decider and the core store
machine's trace validator. Both are point evaluators, and their limits
must be stated: a clean hazard verdict is not
exactly-once (completeness at the frontier is the second conjunct,
and no hazard audit sees it), and the validator certifies label
structure, not history wellformedness. In the other direction, the
decider strengthens the observation bound from below. No function of the
durable core alone computes even the hazard verdict.

\paragraph{Model boundaries.}
The model has six limitations.
\begin{itemize}
\item \emph{A crash inside the fenced re-drive.} The composite of
  Theorem~\ref{thm:fence} is one atom. Nothing covers a crash between
  its heal, its landing, and its fence rise. An implementation must
  realize the atom transactionally or supply its own recovery
  argument.
\item \emph{Two-step publish.} A send and the sink's readable record
  coincide (and a fenced recovery's own re-sends land synchronously).
  A sink that acknowledges first and persists later is outside the
  model.
\item \emph{Per-entry claim fire.} The claim machine's fire executes
  its armed batch all-or-nothing. A recoverer crashing partway
  through a multi-item batch is outside the modeled transition.
\item \emph{Truncation on the wire machine.} Fabricate-or-abandon is
  proved at the fire-point machine. The wire-tier lift is open, and
  no wire-tier truncation result exists here.
\item \emph{Bounded-memory converse.} Section~\ref{sec:memory} bounds
  what the policy analyzed here needs to read. It does not prove that every
  bounded-memory policy requires the same information.
\item \emph{Successor recovery-class law.} The effect machine's
  recovery-cycle structure stops at the embedding of
  Theorem~\ref{thm:agree}. Its concurrency-machine successor is open.
\end{itemize}

\paragraph{Scope of the model.}
The absences above concern mechanics. One more concerns shape. Every
result in
this paper concerns a single one-way pair: one authority holding a
durable, ordered, per-operation record that defines the required deliveries, one
endpoint whose durable acceptance defines the done work, one delivery
direction between them. That shape covers the log-derived pipelines
this paper grew out of, and it excludes real configurations by design:
obligations running both ways, several sources feeding one sink, or
one source feeding several sinks bound together rather than held each
to its own pair, chains judged end to end rather than hop by hop,
payloads transformed in flight rather than delivered verbatim, and
pairs in which even the source keeps no committed record of what it
wrote (nothing left to identify the required deliveries, so the model cannot pose
the question). The shape is not an arbitrary cut: its capabilities are
the same ones every result turns on. Each positive theorem states the capability it requires, and each bound
is proved against a named class of policies that lacks it. The generalization this boundary map points toward is a theory that
treats both endpoints as capability-graded nodes of one kind and
recovers these results as the one-way, record-bearing instance.

\paragraph{DBLog instance.}
A pipeline in the style of DBLog mixes an initial snapshot or
backfill with live log records, and the theory covers the mixture at
the safety level. Justification is append-stable across the machine,
mixed-discipline runs stay hazard-free, and the log-only fragment
recovers the delivery hazards exactly as stated. Two boundaries are
explicit. The snapshot modality provides safety guarantees without exactness
laws. And the watermark algorithm itself is not re-proved here: the
instance consumes our prior snapshot-replay development by
citation~\cite{own-arxiv}, through a proved bridge, and that
development's own disclosed assumptions remain its own. One fact
about process belongs here: at the time of writing, no
independent expert has cold-read the formalization. The machine check
and the controls above are the evidence on offer.

\paragraph{Availability.}
The full development is publicly archived as a permanent
deposit~\cite{dual-write-recovery-artifact}, with build instructions
and an index mapping every numbered statement in this paper to its
mechanized counterpart.

\section{Related Work}\label{sec:related}

\paragraph{Operational guidance.}
The operational rule has been public for a decade, in places
peer-reviewed: Kleppmann's dual-writes argument and its book
form~\cite{kleppmann2015dualwrites,ddia}, the log-centric architecture
literature~\cite{kleppmann2019olep}, and Treat's essay on
exactly-once delivery~\cite{treat}. This paper formalizes the recovery
decision that remains after
log-based derivation. Its theorems identify the sink acceptance evidence
needed to decide what to resend and state the conditions for correct
recovery.

\paragraph{Formal studies.}
Closest to our subject are bounded model checks of outbox and
log-delivery designs:
Masternak and Pobiega model-check an exactly-once outbox in
TLA$^{+}$~\cite{masternak}, and LogPlayer model-checks in-order
exactly-once delivery to backend shards~\cite{logplayer}. Both predate
this work. The boundary between us is method, and it matters here: a
bounded check certifies the states it enumerates, while the observation bound of
Sections~\ref{sec:wall}-\ref{sec:checkpoint} is a higher-order claim
over \emph{arbitrary policy functions}, which is the kind of
statement finite enumeration cannot reach. Our result covers arbitrary recovery policies restricted to the modeled
source-side state. We know of no earlier machine-checked proof with this
scope.

\paragraph{Impossibility results.}
The classical results bound our scope rather than supply our
theorems. FLP~\cite{flp-jacm}, in mechanized form
too~\cite{flp-afp}, concerns asynchronous consensus liveness.
CALM~\cite{hellerstein2020calm} characterizes coordination-freeness.
CAP~\cite{gilbert2002cap} trades availability under partition.
Our model concerns a different question: whether recovery can determine
what an independent sink accepted. Mechanized impossibility work such as
Pactole~\cite{auger2013pactole} provides a methodological precedent for
proving limits on classes of protocols.

\paragraph{Atomic commitment.}
The classical way to remove a separable completion window is a shared
commit~\cite{gray78}, and atomic commitment across autonomous stores
has its own theory and impossibility
results~\cite{mullen1992impossibility,guerraoui1995nbac}. Our failure
class instead has the source committed, delivery incomplete,
no coordinator and no votes. Where a real shared commit spans both
effects, the window this paper studies never opens. Where the sink
cannot join one (the usual reason the pattern exists), these
theorems apply. Nothing here bears on commit-protocol correctness.

\paragraph{Verified data systems.}
Mechanized systems work addresses several adjacent problems. VerIso~\cite{veriso} and mechanized
serializability~\cite{afp-serial} treat isolation inside a database,
with no crash-fed second store. GoJournal and
Perennial~\cite{gojournal,perennial} verify one store's crash
safety, with no independently written copy. Disel~\cite{disel} supports compositional verification of
distributed protocols and their clients, including a two-phase-commit
case study. These studies address isolation, crash safety, or protocol
composition. Our focus is the recovery decision at an independently
accepting sink.

\paragraph{Scope of deployed guarantees.}
The model gives deployed exactly-once machinery a vocabulary for
where its guarantees stop. Kafka's idempotent producer and
transactions~\cite{kip98} deduplicate broker-side by producer id,
epoch, and sequence (the epoch is a deployed cousin of our
generation, consulted at acceptance). The producer-ID expiration
setting defaults to one day. IDs do not expire during an associated
ongoing transaction, and topic retention can cause earlier
expiry~\cite{kafka-exp}. These retention limits illustrate the finite
evidence lifetime studied in Section~\ref{sec:memory}.
Flink's two-phase-commit sink~\cite{flink-docs,flink-blog} places
transactional acceptance at the sink. Kafka Streams' exactly-once guarantee
covers processing from source
topics to output topics and state stores~\cite{streams-eos}. That is
a scope boundary, not a defect, and outside it sits the boundary this
paper studies. No broker, connector, or runtime is verified here. The
map says where each guarantee stops, in the model's terms.

\paragraph{Prior use of fencing.}
Fencing by monotonic token is established engineering: Kleppmann's
fencing-token argument~\cite{klepp-lock}, Kafka's producer
epoch~\cite{kip98}, and mechanized lease reasoning in
Grove~\cite{grove}. We claim the theorems, not the mechanism: the
defeat of every channel-blind policy on a designed pair, the conditional
arrival-fence result with its cost stated as a corollary, and the
claim-fence guarantee, to our knowledge the first mechanized
separation of that information bound from a fenced acceptance result.
The corpus models no clocks and no leases, so it neither supports nor
refutes lock-service designs.

\paragraph{Connection to DBLog.}
The worked instance uses the snapshot-replay model from our prior
study of DBLog~\cite{own-arxiv}. Section~\ref{sec:mech} uses that model
as a substrate. It is not prior art for the recovery claims made here.

\section{A Practitioner's Guide to Dual Writes and CDC Recovery}\label{sec:practitioner}

This section translates the formal results into an operational guide for
backend software engineers, systems architects, and reliability practitioners.
Statements here are direct readings of numbered theorems or engineering
guidance based on them.
Throughout, the \emph{frontier} is the position in the committed source
log up to which delivery is judged, and a relay's \emph{generation} is
its restart counter, which identifies the current run.
We address two production architectures:
(1)~direct dual writes in application request handlers, and
(2)~transactional outbox and CDC pipelines with asynchronous relays.

\paragraph{The dual-write dilemma across one process and a CDC relay.}
Dual writes arise whenever an application must record related facts across
two systems that do not participate in a shared transaction (like a two-phase commit). 
In distributed architectures, this problem manifests in two primary patterns.

\emph{Pattern 1: Direct dual writes in application handlers.}
An application service receives a user request (like an order checkout)
and attempts to update its primary database while publishing an event to a search index, or partner API.
If the database transaction commits first and the application crashes before
invoking the external API, the database holds a committed record whose external
side effect was never triggered.
If the external API call succeeds first and the database transaction
subsequently rolls back, an external side effect was released for a transaction
that never existed.
Network calls across boundaries are also three-valued (success,
failure, or timeout): on timeout, the caller cannot determine whether the
packet was lost before arrival or only the acknowledgment was lost, so
retrying can duplicate side effects. Distributed transactions
(XA/2PC)~\cite{gray78} are widely avoided in distributed architectures due to
blocking latency, coordinator failure risks, and the lack of 2PC
support across cloud services and external APIs.

\emph{Pattern 2: The transactional outbox and CDC relay.}
To eliminate direct dual writes in application code, the standard industry
pattern is the transactional outbox combined with change data
capture (CDC)~\cite{kleppmann2015dualwrites,ddia,debezium}. The application
writes its application records and an outbox event into the \emph{same}
local database transaction. A separate relay process (such as a Debezium connector or outbox
worker) tails the committed database log or outbox table, delivers each event
to the external sink, and records its own progress.

\emph{The relocation of the dual write.}
Change Data Capture relocates the dual write rather than eliminating it.
Instead of occurring in the application handler, the dual write now sits
in the relay delivery loop. For every event, the relay must execute two
distinct durable operations under independent authorities: first delivering
the event to the external sink, and second recording a local checkpoint (such
as advancing an offset, watermark table, or message cursor). Because no
transaction spans the external sink and the relay local checkpoint store, a
crash between delivery and checkpointing reopens the dual-write dilemma.
When both source and sink are Kafka topics and consumer offsets are committed
inside the producer transaction via \texttt{sendOffsetsToTransaction}, the
system achieves a shared commit where the dual-write window never opens
(Section~\ref{sec:related}). In heterogeneous pipelines,
Theorem~\ref{thm:image} (Section~\ref{sec:store}) shows that store-tier
crash safety amounts, under its assumptions, to one condition, that the
derived copy matches the committed log at every running frontier, and the
delivery obligation then moves to the relay (Section~\ref{sec:store}).

\paragraph{Defining the acceptance boundary for API sinks and email.}
To design recoverable delivery, engineers must draw a clear boundary
around what the delivery stage controls. Consider an application sending
shipping confirmation emails through an external provider API (such as
Mailgun or AWS SES), operating across two distinct regimes:
\begin{itemize}
\item \emph{Durable API acceptance.} The delivery obligation is satisfied
  when the provider durably accepts custody of the payload. An HTTP 200 OK
  or 202 Accepted response containing a provider-assigned message ID evidences
  acceptance only if issued after durable persistence, as documented by
  the provider. If the provider does not document post-persistence acknowledgement,
  conservative recovery must treat the response as non-durable and verify
  delivery via the provider query API. In the formal model
  (Section~\ref{sec:model}), this transition marks durable acceptance: once
  accepted, the obligation is completed regardless of downstream transit.
\item \emph{Downstream asynchronous transit.} Everything following durable
  API ingestion (internal queues, MTA routing, SMTP handshakes, spam scoring,
  and inbox polling) lies outside the delivery stage's authority. Transit
  delays must not be conflated with delivery acceptance.
\end{itemize}
This distinction applies equally to payment gateways (such as Stripe charges~\cite{stripe}),
message brokers, and webhooks. A relay crashing while awaiting an HTTP response
cannot deduce from local state whether the sink accepted the payload before the
connection dropped.

\paragraph{Why local state cannot prove delivery.}
When a relay crashes and restarts, its local log and outbox records are
intact, and the last durable checkpoint cursor points to event $N$. Whether the
sink accepted event $N+1$ before the crash or the crash occurred before
transmission cannot be deduced from local state alone. If the sink accepted
event $N+1$ immediately before the crash, resending creates a duplicate. If the
crash occurred before transmission, skipping drops committed work.

This ambiguity is an information limit. A local checkpoint records
\emph{the relay's progress in persisting its own progress records}, not the
sink's durable acceptance. In a post-crash state where delivery succeeded
but the checkpoint write failed, the source database, write-ahead log,
outbox table, and cursor look bit-for-bit identical to a state where
delivery never occurred. Any source-measured recovery policy must choose the
same action in both states, duplicating delivery in one or silently omitting
it in the other (Theorems~\ref{thm:obsbound}-\ref{thm:controlplane},
Section~\ref{sec:wall}).

Committing a checkpoint after every single message reduces the replay
backlog during recovery, but does not eliminate the crash window. It merely
cycles the window once per event. Theorem~\ref{thm:checkpoint}
(Section~\ref{sec:checkpoint}) strengthens the bound to this exact setting:
for a deterministic protocol that checkpoints after every send, crash timing
alone constructs two reachable post-crash states that agree on the local cursor
and disagree on sink acceptance.

\paragraph{Classifying sinks by recovery capability.}
Recovery depends on two distinct capabilities exposed by the sink,
namely what the sink lets recovery \emph{read} (visibility) and what the sink
enforces at \emph{admission} (control). Practical sinks fall into three
archetypes (summarized in Table~\ref{tab:practitioner-sinks}).

\begin{table*}[t]
\caption{Sink capability archetypes and corresponding recovery contracts.}
\label{tab:practitioner-sinks}
\footnotesize
\begin{tabular}{@{}>{\raggedright\arraybackslash}p{0.18\textwidth}>{\raggedright\arraybackslash}p{0.24\textwidth}>{\raggedright\arraybackslash}p{0.26\textwidth}>{\raggedright\arraybackslash}p{0.26\textwidth}@{}}
\toprule
\textbf{Sink archetype} & \textbf{Examples} & \textbf{Read and admission capability} & \textbf{Recovery contract} \\
\midrule
Blind or write-only &
Basic HTTP webhooks, write-only queues, legacy email APIs &
Cannot query past accepted events, with no admission deduplication or fencing tokens. &
No source-measured policy can guarantee exactly-once delivery at the recovery frontier. The system must explicitly choose between duplicate retries or omission under ambiguity. \\
\addlinespace[3pt]
Queryable log or store &
Kafka topic (with read ACLs), replicated datastore table, queryable event store &
Recovery can inspect sink history to determine which source events are missing, but the sink does not natively deduplicate. &
Compute difference between committed source log and sink records, and resend only missing events (Theorem~\ref{thm:escape}). Requires strictly ascending coordinates (P4) and generation fencing against stragglers and concurrent recoverers. \\
\addlinespace[3pt]
Deduplicating or fenced &
Stripe API (idempotency keys), Kafka transactional producer (\texttt{transactional.id} + epoch), cloud queues with FIFO deduplication &
Sink admission gate deduplicates by client-supplied key or rejects stale producer epochs. &
Deduplicate by stable event key, or compute missing events from authoritative acceptance records and use fenced recovery. Retain deduplication state throughout the recovery horizon. \\
\bottomrule
\end{tabular}
\end{table*}

\emph{Archetype 1: Blind or write-only sinks.}
When the sink API provides no query endpoint to inspect past acceptances
and does not support idempotency keys or fencing, recovery cannot determine
what was delivered. No recovery policy that consults only source-side state
can achieve exactly-once delivery at the recovery frontier
(Theorem~\ref{thm:obsbound}). The system must explicitly select its failure
preference by either resending on ambiguity (accepting duplicates) or skipping on
ambiguity (accepting omissions). Client-side retries and backoff do not remove this
trade-off~\cite{treat}.

\emph{Archetype 2: Queryable logs and stores.}
When writing to a sink whose accepted history can be read back (such as a
Kafka topic or secondary database table), recovery can query the sink to
inspect which events are present. By computing the difference between the
committed source log and the sink's accepted records, recovery resends only
missing events (the Sink-Reading Escape of Theorem~\ref{thm:escape},
Section~\ref{sec:door}). Three operational conditions govern this approach:
\begin{itemize}
\item \emph{Coordinate monotonicity (Condition P4).} Source coordinates must
  be strictly increasing per event in commit order (such as a database WAL LSN
  or an outbox event coordinate assigned at commit). Auto-increment sequence
  numbers generated at insert time can commit out of order under concurrent
  transactions, which breaks the commit-order assumption the frontier
  relies on and can leave committed events below a frontier unrecovered.
\item \emph{The credential constraint.} Production services provisioned with
  write-only permissions (such as Kafka produce rights without consume
  rights) silently degrade an Archetype~2 sink into an Archetype~1 blind sink.
\item \emph{Acceptance history versus current state.} In a key-value store,
  reading current state (such as \texttt{status = "shipped"}) distinguishes
  some crash states, but fails if intermediate updates were overwritten.
  Recovery requires a record of per-operation acceptance. Compaction that
  preserves key presence is safe (Section~\ref{sec:memory}), but
  presence-destroying retention (such as hard-deleting records or purging
  tombstones) destroys queryable evidence. Sinks read from replicas must also
  enforce strong read consistency to avoid replica lag hiding recent writes.
\end{itemize}

\emph{Archetype 3: Deduplicating and fenced sinks.}
Stripe deduplicates requests using client-supplied idempotency keys~\cite{stripe}.
Kafka's stable
\texttt{transactional.id} supports transaction recovery and fences old
producers~\cite{kip98}. A relay can still resend an event already committed
to Kafka. It must also commit its progress with the output
or use the sink's authoritative acceptance record and the fencing rules below
to send only missing events.
With permanent deduplication memory and strictly increasing coordinates,
at-least-once delivery gives exactly-once
coverage (Theorem~\ref{thm:dedup}, Section~\ref{sec:memory}). In practice,
finite memory requires a retention window long enough to cover replay.
Derive idempotency keys from stable source coordinates and keep them unchanged
across retries. Payload hashes alone can merge distinct operations with identical
contents and suppress required delivery.

\paragraph{Controlling in-flight stragglers and concurrent recoverers.}
Even with queryable or deduplicating sinks, two distinct distributed systems
hazards can invalidate recovery.

\emph{Hazard 1: In-flight network stragglers.}
Suppose Relay~1 sends an event to a queryable sink, but crashes while the
packet remains buffered in a retrying network proxy or transport queue.
A recovery worker restarts, queries the sink, observes that event $N+1$ is
missing, and resends event $N+1$, which the sink accepts. Moments later, the
delayed packet from Relay~1 arrives at the sink. If the sink lacks arrival
fencing, it accepts the delayed request, causing a duplicate.
Theorem~\ref{thm:wire} (Section~\ref{sec:wire}) proves that no channel-blind
policy reading only durable state can detect in-flight stragglers. The proved
remedy is the atomic fenced re-drive (Theorem~\ref{thm:fence}), where the sink
atomically accepts the recovery delta and advances its generation threshold to
one above the crashed run's generation, rejecting older requests at
admission. In this model, recovery uses the crashed generation. Fencing first
blocks its requests, while fencing afterward leaves a window for stragglers.
Recovery with a new generation falls outside this argument
(Section~\ref{sec:wire}).
\emph{The cost of arrival fencing.} While arrival fencing guarantees
exactness at the recovery frontier, Corollary~\ref{cor:price} proves that it
can reject delayed requests that carried work needed at subsequent frontiers,
leaving that subsequent work as a separate unfulfilled obligation.

\emph{Hazard 2: Concurrent recoverers.}
In containerized environments, a recovery worker losing its heartbeat can
cause an orchestrator to launch a replacement worker. If both workers run
concurrently, both query the sink, observe event $N+1$ missing, and fire
re-drive batches.
Distributed lock leases do not prevent this hazard~\cite{klepp-lock}. If a
worker pauses and its lease expires, a replacement worker claims leadership,
but the original worker can wake and fire its prepared network call.
Theorem~\ref{thm:second} (Section~\ref{sec:claim}) proves this
Second-Recoverer Bound, establishing that without an admission fence,
concurrent recoverers can duplicate deliveries or lose work. The proved
mitigation is atomic claim fencing (Lemma~\ref{lem:claim} and
Theorem~\ref{thm:claim}), under which a recoverer atomically sets the sink fence
to its own generation and arms its batch from the sink record read under that
claim. The fire step succeeds at the sink only if the fence has not been
advanced past the worker's generation.
Theorem~\ref{thm:claim} requires window freshness (Condition C3, requiring that no
source commit occurs within the recovery range between claim and fire) and
validates generation authority at the sink admission gate (Condition C4).

\emph{Fencing internal versus external SaaS sinks.}
Kafka enforces epoch fencing natively at admission via transactional producer
epochs. Relational datastores implement the fence as a conditional generation
check inside the same transaction that applies the re-drive batch. External
SaaS APIs lacking epoch endpoints (such as Stripe) cannot enforce
claim fencing directly. Source outbox leases reduce redundant calls under normal
operation, but cannot stop a resumed worker from firing. For external APIs,
admission deduplication via deterministic idempotency keys is the sole barrier
absorbing stragglers. As Section~\ref{sec:claim} notes, no theorem merges
arrival and claim fencing: they address distinct failure models on different
machines.

\paragraph{Retention horizons across deduplication caches and source logs.}
Deduplication state cannot be retained indefinitely in real systems.
\begin{itemize}
\item Stripe's documentation specifies that idempotency keys become
  eligible for removal after at least 24 hours~\cite{stripe}.
\item Apache Kafka's producer ID expiration
  (\texttt{producer.\allowbreak id.\allowbreak expiration.ms}) defaults to 1 day~\cite{kafka-exp},
  though topic log retention can cause earlier expiry.
\item Cloud FIFO queues configure sliding deduplication windows (such as
  a 5-minute window from message acceptance).
\item Upstream database outbox tables undergo WAL or binlog purging.
\end{itemize}

\emph{Replay after retention expiry.}
If an upstream database undergoes a prolonged outage or network partition
and recovery replays historical events after the sink's deduplication cache
has expired, the sink treats the replayed requests as novel transactions,
triggering a wave of duplicate side effects.

\emph{Sizing and retention guidance.}
\begin{itemize}
\item \emph{Sink deduplication horizon.} The theoretical model in
  Section~\ref{sec:memory} measures deduplication memory in entry-count
  horizons. In practice, converting duration to entry capacity requires
  bounding the maximum arrival rate. The retention horizon must exceed the
  longest anticipated recovery window:
  $T_{\mathrm{dedup}} > T_{\mathrm{max\_outage}} + T_{\mathrm{operator\_delay}}
  + T_{\mathrm{replay\_catchup}} + \mathrm{margin}$.
  If deduplication evidence expires before replay finishes, the sink admits
  replayed events as novel (Section~\ref{sec:memory}). When fixed provider
  windows (such as a 5-minute FIFO deduplication window) cannot be widened,
  replays exceeding the window cannot rely on the sink for deduplication.
  If the queue cannot report past acceptances, recovery after expiry is blind,
  as in Archetype~1. An auxiliary record helps only if it is authoritative and
  updated atomically with queue acceptance. Separate sender-side bookkeeping
  leaves the crash window open.
\item \emph{Source history retention.} To maintain journal-grade exactness,
  upstream outbox records must be retained until the downstream sink has
  durably confirmed their acceptance (Corollary~\ref{cor:journal}).
  Prematurely truncating source history removes the content needed to identify
  which operations require delivery, leaving only coordinate bounds or metadata.
  With the historical content lost, recovery is forced to choose between phantom
  fabrication and abandonment (Theorem~\ref{thm:trunc}).
\end{itemize}

\paragraph{Architecture checklist, auditing, and recovery tests.}
To translate these boundaries into practice, teams can apply the following
checklist, monitoring strategy, and recovery tests.

\emph{Architecture checklist.}
\begin{itemize}
\item \emph{Monotonic event coordinates.} Assign stable event IDs at source
  commit time strictly increasing in commit order (such as a database WAL LSN
  or an outbox event coordinate assigned at commit). Auto-increment sequence
  numbers generated at insert can commit out of order under concurrent transactions.
  Never generate fresh random UUIDs during retries.
\item \emph{Documented sink contracts.} Explicitly classify each sink as
  Blind, Queryable, or Deduplicating. For Blind sinks, record the failure
  preference (duplicates versus omissions).
\item \emph{Arrival fencing for in-flight stragglers.} For internal sinks,
  ensure recovery atomically raises the generation fence upon re-drive,
  preventing delayed in-flight requests from prior runs from landing.
\item \emph{Claim fencing for concurrent recoverers.} Ensure replacement
  recovery workers claim leadership and validate generation tokens at the
  sink admission gate before firing batches.
\item \emph{Verified read permissions.} Ensure relay service accounts
  possess explicit read/consume permissions where queryable recovery is
  assumed.
\item \emph{Dead-letter queue trade-off.} Routing unprocessable payloads
  to a Dead-Letter Queue (DLQ) prevents head-of-line blocking, but accepts
  a frontier omission. The reconciliation monitor must record these coordinates
  as acknowledged omissions (avoiding persistent alerts), and the DLQ write
  itself represents an auxiliary dual-write boundary.
\item \emph{Outbox storage safeguards.} Because outbox records must be
  retained until downstream confirmation (Corollary~\ref{cor:journal}),
  extended sink outages cause outbox tables to accumulate rows. Implement
  threshold alerts on row count and WAL utilization to prevent storage
  exhaustion.
\end{itemize}

\emph{Monitoring and reconciliation auditing across frontiers.}
Delivery failures are asymmetrical in production. Duplicate side effects are
loud (prompting complaints or billing alerts), whereas omissions are silent
(a skipped event can leave dashboards green).
To detect silent omissions on Queryable or Deduplicating sinks, implement an
independent \emph{reconciliation auditor}, a background job querying the source
database for committed event coordinates and verifying their presence in the
sink's acceptance log between the last verified position and the latest completed
frontier. Alert whenever a committed obligation remains unaccepted past the SLA.

\emph{Recovery tests.}
Dual-write and CDC pipelines should be verified in staging against failure
scenarios tailored to sink archetypes:
\begin{itemize}
\item \emph{The checkpoint crash (all archetypes).} Terminate the relay
  immediately after the sink ingests a request but before the local checkpoint
  commits. For Archetype~1 (Blind), verify the system produces the configured
  compromise (duplicate retry or skip). For Archetypes~2 and~3, verify that
  restart does not create duplicate effects.
\item \emph{The in-flight straggler (Archetype~2).} Inject artificial
  latency on a delivery request, trigger a timeout and recovery restart, let
  recovery complete, and then release the delayed original request. Verify
  that the sink rejects the straggler at its generation fence.
\item \emph{The split-brain worker (Archetypes~2 and~3).} Pause an active
  recovery worker, allow a failover worker to claim leadership
  and complete recovery, then resume the original worker.
  Verify that the superseded worker's writes are rejected by the claim fence.
\item \emph{The expired deduplication replay (Archetype~3).} Fast-forward cache
  expiry or flush the sink's deduplication state, then trigger a manual replay
  of historical events. Verify that operational safeguards reject blind
  replays across expired retention windows.
\item \emph{Crash inside fenced recovery.} Inject a crash during recovery
  re-drive. Because the model abstracts fenced re-drive as an atomic transition
  (Section~\ref{sec:mech}), implementations execute distinct steps (reading
  accepted history, submitting re-drive, and raising the fence). Verify that a
  crash before fence advancement leaves the system recoverable without
  duplicate side effects.
\end{itemize}

\section{Conclusion}\label{sec:conclusion}

In this paper, we developed a machine-checked theory of recovery when a
process delivers an effect to one system and records its progress in
another. We proved that the durable state of the crashed side cannot
decide whether a delivery should be repeated. Two reachable states can
share identical source databases, logs, control states, restart
counters, and checkpoints while differing in what the sink accepted. Any policy
consulting only this shared state must duplicate an effect in one
execution or omit committed work in the other. We also proved
this result for a deterministic deliver-then-checkpoint protocol in which
crash timing is the only nondeterminism.

The positive results identify the controls required for exact recovery.
An authoritative, complete, and current record of sink acceptance allows
recovery to send only the committed operations the sink has not yet
accepted. This works only if nothing changes the record between the read
and the re-send. For requests still in flight from the crashed run,
recovery uses an arrival fence. For overlapping recoverers, an atomic
claim ties the read to the worker that made it, and the sink accepts the
batch only while that claim is still the latest. These mechanisms solve distinct
hazards on different machines, with conditions and costs that form part of the
guarantee.

The guarantee also depends on evidence retention. With permanent
deduplication, every required payload appears once in the deduplicated
view exactly when each has been delivered at least once. A
finite deduplication window can admit replays as new after identity
expiry, and truncating source history removes the information needed to
identify undelivered operations. Practical recovery must therefore retain
sink acceptance evidence and source history for the full recovery horizon.

Transactional outboxes and change data capture remain the right tools
for deriving downstream state from a commit log. At every recovery
source-log position, downstream state must match the state obtained by
replaying the committed log to that position. This agreement is necessary
and sufficient for store-level crash safety under the model's assumptions.
The relay created by that design has its own delivery boundary. Reviews of
that boundary should identify the sink's durable acceptance event, active
requests or recoverers, and retention horizons on both sides. A source
checkpoint tracks progress, but cannot replace the sink's acceptance record.

\paragraph{AI assistance.}
Generative AI assistants \mbox{Claude Fable~5} (Anthropic) and
\mbox{ChatGPT~5.6 Sol} (OpenAI) supported drafting, revision, and
Isabelle/HOL formalization. The author directed the research, verified all
proofs and text, and takes full responsibility for this work. The assistants
are tools, not authors.

\bibliographystyle{unsrtnat}
\bibliography{refs}
\flushcolsend

\end{document}